\newtheorem{definition}{\noindent{\it Definition}}[section]
\newtheorem{theorem}{\noindent{\it Theorem}}[section]
\newtheorem{lemma}[theorem]{\noindent{\it Lemma}}
\newtheorem{remark}[theorem]{\noindent{\it Remark}}
\newenvironment{proof}{\noindent{\it Proof:}}{$\hfill$ $\Box$\\ }
\begin{document}

\title{Convolutional Codes: Techniques of Construction}

\author{Giuliano G. La Guardia
\thanks{Giuliano Gadioli La Guardia is with Department of Mathematics and Statistics,
State University of Ponta Grossa (UEPG), 84030-900, Ponta Grossa,
PR, Brazil. }}

\maketitle

\begin{abstract}
In this paper we show how to construct new convolutional codes
from old ones by applying the well-known techniques: puncturing,
extending, expanding, direct sum, the $({ \bf u}| { \bf u}+{ \bf
v})$ construction and the product code construction. By applying
these methods, several new families of convolutional codes can be
constructed. As an example of code expansion, families of
convolutional codes derived from classical
Bose-Chaudhuri-Hocquenghem (BCH), character codes and Melas codes
are constructed.
\end{abstract}

\section{Introduction}\label{Intro}

Constructions of (classical) convolutional codes and their
corresponding properties have been presented in the literature
\cite{Forney:1970,Piret:1983,Piret:1988,Piret:1988art,Rosenthal:1999,York:1999,Hole:2000,Rosenthal:2001,Gluesing:2006,Luerssen:2006,Schmale:2006,Aly:2007,Luerssen:2008,LaGuardia:2012}.
Forney \cite{Forney:1970} was the first author who introduced
algebraic tools in order to describe convolutional codes.
Addressing the construction of maximum-distance-separable (MDS)
convolutional codes (in the sense that the codes attain the
generalized Singleton bound introduced in \cite[Theorem
2.2]{Rosenthal:1999}), there exist interesting papers in the
literature \cite{Rosenthal:1999,Rosenthal:2001,Schmale:2006}.
Concerning the optimality with respect to other bounds we have
\cite{Piret:1983,Piret:1988art}, and in \cite{Gluesing:2006},
strongly MDS convolutional codes were constructed. In
\cite{York:1999,Hole:2000,Aly:2007,LaGuardia:2012}, the authors
presented constructions of convolutional BCH codes. In
\cite{Luerssen:2006}, doubly-cyclic convolutional codes were
constructed and in \cite{Luerssen:2008}, the authors described
cyclic convolutional codes by means of the matrix ring.

It is not simple to derive families of such codes by means of
algebraic approaches. In other words, most of the convolutional
codes available in the literature are constructed case by case. In
order to derive new families of convolutional codes by means of
algebraic methods, this paper is devoted to construct new
convolutional from old ones. More precisely, we show how to obtain
new codes by extending, puncturing, expanding, applying the direct
sum, applying the $({ \bf u}| { \bf u}+{ \bf v})$ construction and
finally employing the product code construction.

This paper is arranged as follows. In Section~\ref{nota} we fix
the notation. Section~\ref{II} presents a review of basics on
convolutional codes. In Section~\ref{III} we present the
contributions of this work, namely, we show how to construct
convolutional codes by applying the techniques above mentioned. In
Section~\ref{IV} we present two code tables containing parameters
of convolutional codes known in the literature as well as
containing parameters of new convolutional codes obtained from the
code expansion developed in Subsection~\ref{sub3.1}; in
Section~\ref{V}, a summary of this paper is given.

\section{Notation}\label{nota}

Throughout this paper, $p$ denotes a prime number, $q$ denotes a
prime power, ${\mathbb F}_{q}$ is a finite field with $q$
elements. The (Hamming) distance of two vectors ${ \bf v}, { \bf w
}\in {\mathbb F}_{q}^{n}$ is the number of coordinates in which ${
\bf v}$ and ${ \bf w }$ differ. The (Hamming) weight of a vector
${ \bf v}=(v_1, v_2, \ldots, v_{n}) \in {\mathbb F}_{q}^{n}$ is
the number of nonzero coordinates of ${ \bf v}$. The trace map
tr$_{q^{m}/q}: {\mathbb F}_{q^{m}} \longrightarrow {\mathbb
F}_{q}$ is defined as tr$_{q^{m}/q}(a):= \displaystyle
\sum_{i=0}^{m-1} a^{q^{i}}$. Sometimes we abuse the notation by
writing $C = {[n, k, d]}_{q}$ to denote a liner code of length
$n$, dimension $k$ and minimum distance $d$. The (Euclidean) dual
of $C$ is denoted by $C^{\perp}$ and we put
$d^{\perp}=d(C^{\perp})$ meaning the minimum distance of
$C^{\perp}$.

\section{Convolutional Codes}\label{II}

In this section we present a brief review of convolutional codes.
For more details we refer the reader to
\cite{Piret:1988,Johannesson:1999,Huffman:2003}.

Recall that a polynomial encoder matrix $G(D) \in {\mathbb
F}_{q}{[D]}^{k \times n}$ is called \emph{basic} if $G(D)$ has a
polynomial right inverse. A basic generator matrix of a
convolutional code $C$ is called \emph{reduced} (or minimal
\cite{Rosenthal:2001,Huffman:2003,Luerssen:2008}) if the overall
constraint length $\delta =\displaystyle\sum_{i=1}^{k} {\delta}_i$
has the smallest value among all basic generator matrices of $C$;
in this case the overall constraint length $\delta$ is called the
\emph{degree} of the code. The \emph{weight} of an element ${\bf
v}(D)\in {\mathbb F}_{q} {[D]}^{n}$ is defined as wt$({\bf
v}(D))=\displaystyle\sum_{i=1}^{n}$wt$(v_i(D))$, where
wt$(v_i(D))$ is the number of nonzero coefficients of $v_{i}(D)$.

 \begin{definition}\cite{Johannesson:1999}
A rate $k/n$ \emph{convolutional code} $C$ with parameters $(n, k,
\delta ; \mu,$ $d_{f} {)}_{q}$ is a submodule of ${\mathbb F}_q
{[D]}^{n}$ generated by a reduced basic matrix $G(D)=(g_{ij}) \in
{\mathbb F}_q {[D]}^{k \times n}$, that is, $C = \{ {\bf u}(D)G(D)
| {\bf u}(D)\in {\mathbb F}_{q} {[D]}^{k} \}$, where $n$ is the
length, $k$ is the dimension, $\delta =\displaystyle\sum_{i=1}^{k}
{\delta}_i$ is the \emph{degree}, where ${\delta}_i =
{\max}_{1\leq j \leq n} \{ \deg g_{ij} \}$, $\mu = {\max}_{1\leq
i\leq k}\{{\delta}_i\}$ is the \emph{memory} and
$d_{f}=$wt$(C)=\min \{wt({\bf v}(D)) \mid {\bf v}(D) \in C, {\bf
v}(D)\neq 0 \}$ is the \emph{free distance} of the code.
\end{definition}

If ${\mathbb F}_{q}((D))$ is the field of Laurent series we define
the weight of ${\bf u}(D)$ as wt$({\bf u}(D)) = {\sum}_{i \in
\mathbb{Z}}$wt$(u_i)$. A generator matrix $G(D)$ is called
\emph{catastrophic} if there exists a ${\bf u}{(D)}^{k}\in
{\mathbb F}_{q}{((D))}^{k}$ of infinite Hamming weight such that
${\bf u}{(D)}^{k}G(D)$ has finite Hamming weight. The
convolutional codes constructed in this paper have basic generator
matrices; consequently, they are noncatastrophic.

We define the Euclidean inner product of two $n$-tuples ${\bf
u}(D) = {\sum}_i {\bf u}_i D^i$ and ${\bf v}(D) = {\sum}_j {\bf
v}_j D^j$ in ${\mathbb F}_q {[D]}^{n}$ as $\langle {\bf u}(D)\mid
{\bf v}(D)\rangle = {\sum}_i {\bf u}_i \cdot {\bf v}_i$. If $C$ is
a convolutional code then its (Euclidean) dual is given by
$C^{\perp }=\{ {\bf u}(D) \in {\mathbb F}_q {[D]}^{n}\mid \langle
{\bf u}(D)\mid {\bf v}(D)\rangle = 0$ for all ${\bf v}(D)\in C\}$.

\subsection{Convolutional Codes Derived From Block
Codes}\label{IIA}

Let ${[n, k, d]}_{q}$ be a block code whose parity check matrix
$H$ is partitioned into $\mu +1$ disjoint submatrices $H_i$ such
that \begin{eqnarray*} H = \left[
\begin{array}{c}
H_0\\
H_1\\
\vdots\\
H_{m}\\
\end{array}
\right], \ \ \ \ (1)
\end{eqnarray*}
where each $H_i$ has $n$ columns, obtaining the polynomial matrix
\begin{eqnarray*}
G(D) =  {\tilde H}_0 + {\tilde H}_1 D + {\tilde H}_2 D^2 + \cdots
+ {\tilde H}_{\mu} D^{\mu}. \ \ \ \ (2)
\end{eqnarray*}
The matrix $G(D)$ with $\kappa$ rows generates a convolutional
code $V$, where $\kappa$ is the maximal number of rows among the
matrices $H_i$; the matrices ${\tilde H}_i$, where $0\leq i\leq
\mu$, are derived from the respective $H_i$ by adding zero-rows at
the bottom in such a way that the matrix ${\tilde H}_i$ has
$\kappa$ rows in total.

\begin{theorem}\cite[Theorem 3]{Aly:2007}\label{A}
Suppose that $C \subseteq {\mathbb F}_q^n$ is an ${[n, k, d]}_{q}$
code with parity check matrix $H \in {\mathbb F}_q^{(n-k)\times
n}$ partitioned into submatrices $H_0, H_1, \cdots, H_{\mu}$ as
above, such that $\kappa =$ $\operatorname{rk} H_0$ and
$\operatorname{rk} H_i \leq \kappa$ for $1 \leq i\leq \mu$. Then
the matrix $G(D)$ is a reduced basic generator matrix. Moreover,
if $d_f$ and $d_f^{\perp}$ denote the free distances of $V$ and
$V^{\perp}$, respectively, $d_i$ denote the minimum distance of
the code $C_i = \{ {\bf v}\in {\mathbb F}_q^n \mid {\bf v} {\tilde
H}_i^t =0 \}$ and $d^{\perp}$ is the minimum distance of
$C^{\perp}$, then one has $\min \{ d_0 + d_{\mu} , d \} \leq
d_f^{\perp} \leq  d$ and $d_f \geq d^{\perp}$.
\end{theorem}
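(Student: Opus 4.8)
The plan is to prove the four assertions separately; everything rests on one elementary remark. Since $H\in{\mathbb F}_q^{(n-k)\times n}$ is a parity-check matrix of an $[n,k]$ code, its $n-k$ rows are linearly independent over ${\mathbb F}_q$, hence over $\overline{{\mathbb F}}_q$; and the hypothesis $\kappa=\operatorname{rk}H_0$ says exactly that $H_0$ consists of $\kappa$ of those rows with no zero padding, so $G(0)=\tilde H_0$ has rank $\kappa$.

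To see that $G(D)$ is basic I would use the criterion that a polynomial matrix of full row rank $\kappa$ has a polynomial right inverse if and only if $\operatorname{rk}G(\alpha)=\kappa$ for every $\alpha\in\overline{{\mathbb F}}_q$. Each row of $\tilde H_i$ is either a row of $H$ (one lying in the $i$-th block of the partition) or the zero row, and no row of $H$ occurs in two blocks; hence, for $\lambda\in\overline{{\mathbb F}}_q^{\,\kappa}$, the equation $\lambda G(\alpha)=0$, after collecting the coefficient of each row $h_r$ of $H$, becomes $\alpha^{\,i(r)}c_r=0$ for all $r$, where $i(r)$ is the unique block containing $h_r$ and $c_r$ is the corresponding entry of $\lambda$. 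For $\alpha\neq0$ this forces $c_r=0$ for every $r$, and since block $0$ carries a row of $H$ for each of the $\kappa$ coordinates of $\lambda$, we get $\lambda=0$; for $\alpha=0$ it reduces to $\lambda\tilde H_0=0$, again $\lambda=0$. For reducedness I would invoke the criterion that a basic encoder is reduced precisely when its highest-degree-coefficient matrix $[G]_h$ has full row rank $\kappa$: by definition of the $r$-th row degree $\delta_r$, the $r$-th row of $[G]_h$ is a \emph{nonzero} row of $\tilde H_{\delta_r}$, hence an honest row of $H$, and since distinct $r$ single out distinct (block, position) pairs, these are $\kappa$ distinct, hence linearly independent, rows of $H$.

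For the distance bounds I first record that $C=\ker H=C_0\cap\cdots\cap C_{\mu}$ and that the row space of $H$ is $C^{\perp}$. If $v(D)=x(D)G(D)=\sum_s v_sD^s\in V$, then each $v_s=\sum_i x_{s-i}\tilde H_i$ is a linear combination of rows of $H$, i.e.\ $v_s\in C^{\perp}$; picking $s$ with $v_s\neq0$ gives $\operatorname{wt}(v(D))\ge\operatorname{wt}(v_s)\ge d^{\perp}$, so $d_f\ge d^{\perp}$. Unwinding the orthogonality relations yields $V^{\perp}=\{\,u(D)=\sum_s u_sD^s:\ \sum_{i=0}^{\mu}u_{t+i}\tilde H_i^{\,t}=0\ \text{for all }t\in{\mathbb Z}\,\}$; a constant word $u(D)=u_0$ lies in $V^{\perp}$ iff $u_0\tilde H_i^{\,t}=0$ for all $i$, i.e.\ iff $u_0\in C$, so taking a minimum-weight word of $C$ gives $d_f^{\perp}\le d$.

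Finally, for $d_f^{\perp}\ge\min\{d_0+d_{\mu},d\}$, let $u(D)\in V^{\perp}$ be nonzero; after multiplying by a power of $D$ we may assume its support is $\{0,1,\dots,e\}$ with $u_0\neq0\neq u_e$. If $e=0$, the preceding paragraph gives $u_0\in C$, so $\operatorname{wt}(u(D))\ge d$. If $e\ge1$, evaluate the defining relations at the two extreme shifts: at $t=-\mu$ every index $t+i$ is $\le0$ and only $i=\mu$ produces index $0$, so the relation collapses to $u_0\tilde H_{\mu}^{\,t}=0$, i.e.\ $u_0\in C_{\mu}$; at $t=e$ every index $t+i$ is $\ge e$ and only $i=0$ produces index $e$, so it collapses to $u_e\tilde H_0^{\,t}=0$, i.e.\ $u_e\in C_0$. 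Since $e\ge1$, $\operatorname{wt}(u(D))=\sum_s\operatorname{wt}(u_s)\ge\operatorname{wt}(u_0)+\operatorname{wt}(u_e)\ge d_{\mu}+d_0$, and combining the two cases proves the bound. I expect the main friction to lie in the structural part — recognizing that right-primeness of $G(D)$ is exactly the linear independence of the rows of $H$ together with $\operatorname{rk}\tilde H_0=\kappa$, and applying the minimality criterion correctly — while the distance estimates are short once one notices that only the extreme shifts of the parity relations isolate a single coefficient block, and that the constant codewords must be kept as a separate case (which is why the bound involves the $\min$ with $d$ rather than just $d_0+d_{\mu}$).
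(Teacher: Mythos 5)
The paper does not prove this theorem at all --- it is quoted verbatim from \cite{Aly:2007} --- so your reconstruction can only be judged on its own merits. Most of it is sound and matches the standard argument: right-primeness via the criterion that $G(\alpha)$ must have rank $\kappa$ for every $\alpha$ in the algebraic closure, using that the rows of $H$ are linearly independent so that $\lambda G(\alpha)=0$ decouples into $\alpha^{i(r)}c_r=0$; reducedness via the full-rank criterion for the highest-order coefficient matrix, whose rows are distinct rows of $H$; the bound $d_f\ge d^{\perp}$ because every coefficient of $x(D)G(D)$ lies in the row space of $H$; and the upper bound $d_f^{\perp}\le d$ from a constant word supported on a minimum-weight codeword of $C$.

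The step that needs attention is your characterization $V^{\perp}=\{u(D)\,:\,\sum_{i=0}^{\mu}u_{t+i}\tilde H_i^{\,t}=0 \text{ for all } t\in\mathbb{Z}\}$. With the dual as this paper defines it --- polynomial words orthogonal to all polynomial codewords, and $V$ closed only under multiplication by \emph{nonnegative} powers of $D$ --- unwinding the orthogonality gives these relations only for $t\ge 0$, and your lower bound uses exactly the missing ones: $u_0\in C_{\mu}$ comes from the relation at $t=-\mu$, and the case $e=0$ ($u_0\in C$) uses $t=-1,\dots,-\mu$. This is not cosmetic: for the literal polynomial dual the bound $\min\{d_0+d_{\mu},d\}\le d_f^{\perp}$ can fail. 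Take $C$ the binary $[3,1,3]$ repetition code with $H_0=(1\;1\;0)$, $H_1=(0\;1\;1)$, so $G(D)=(1,\;1+D,\;D)$, $d_0=d_1=1$, $d=3$; the constant word $u=(0,0,1)$ is orthogonal to every polynomial codeword $x(D)G(D)$ (the inner product equals $x_{-1}=0$), yet has weight $1<\min\{d_0+d_{\mu},d\}=2$. So the statement is true only when $V^{\perp}$ is understood in the bidirectionally shift-invariant (Laurent/sequence) sense, in which the parity relations do hold for all $t\in\mathbb{Z}$; that is the convention of \cite{Aly:2007}, and with it your argument goes through verbatim. You should say explicitly which dual you are working with and justify the all-$t$ relations, since as written this step does not follow from the definition of the dual given in the present paper.
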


%%%%%%%%%%%%%%%%%%%%%%%%%%%%%%%%%%%%%%%%%%%%%%%%%%%%%%%%%%%%%%%%%%%%%%%%%%%%%%%%%%%%%%%
%%%%%%%%%%%%%%%%%%%%%%%%%%%%%%%%%%%%%%%%%%%%%%%%%%%%%%%%%%%%%%%%%%%%%%%%%%%%%%%%%%%%%%%
%%%%%%%%%%%%%%%%%%%%%%%%%%%%%%%%%%%%%%%%%%%%%%%%%%%%%%%%%%%%%%%%%%%%%%%%%%%%%%%%%%%%%%%
%%%%%%%%%%%%%%%%%%%%%%%%%%%%%%%%%%%%%%%%%%%%%%%%%%%%%%%%%%%%%%%%%%%%%%%%%%%%%%%%%%%%%%%

\section{Construction Methods}\label{III}

Constructions of convolutional codes have appeared in the
literature
\cite{Piret:1988,Rosenthal:1999,Hole:2000,Rosenthal:2001,Gluesing:2006,Luerssen:2006,Schmale:2006,Luerssen:2008}.
In this section we apply the method proposed by Piret
\cite{Piret:1988} and recently generalized by Aly \emph{et al.}
\cite[Theorem 3]{Aly:2007}, which consists in the construction of
convolutional codes derived from block codes, in order to
generalize to convolutional codes the techniques of puncturing,
extending, expanding, direct sum, the $({ \bf u}| { \bf u}+{ \bf
v})$ construction and the product code construction. Throughout
this section we utilize the notation given in Section~\ref{II} (
Subsection~\ref{IIA}).

\begin{remark}
Although most families of convolutional codes constructed in this
section consist of unit memory codes, the procedures adopted in
this paper also hold in more general setting, i.e., the techniques
developed here also hold to construct families of multi-memory
convolutional codes as well.
\end{remark}

\subsection{Code Expansion}\label{sub3.1}

Given a basis $\beta=\{ b_1, b_2, \ldots, b_{m} \}$ of ${\mathbb
F}_{q^{m}}$ over ${\mathbb F}_{q}$, a \emph{dual basis}
\cite{Lidl:1997} of $\beta$ is given by ${\beta}^{\perp}=\{
{b_1}^{\ast}, {b_2}^{\ast},\ldots, {b_{m}}^{\ast} \}$, with
tr$_{q^{m}/q}(b_{i}{b_{j}}^{\ast}) =\delta_{ij}$, for all $i,j\in
\{1,\ldots, m \}$. If $C$ is an ${[n, k, d_1]}_{q^{m}}$ code and
$\beta=\{b_1, b_2, \ldots, b_{m}\}$ is a basis of ${\mathbb
F}_{q^{m}}$ over ${\mathbb F}_{q}$, then the $q$-ary expansion
$\beta(C)$ of $C$ with respect to $\beta$ is an ${[mn, mk, d_{2}
\geq d_1]}_{q}$ code given by $\beta(C):= \{ {(c_{ij})}_{i,j}\in
{{\mathbb F}_{q}}^{mn} \mid\textbf{c}= {(\sum_{j}^{} c_{ij} b_{j}
{)}_{i}} \in C \}$. We need the following result to proceed
further.

\begin{lemma}\label{dualcode}\cite{Grassl:1999,Ashikhmin:2000,LaGuardia:2010}
Let $C={[n, k, d]}_{q^{m}}$ be a linear code over ${\mathbb
F}_{q^{m}}$, where $q$ is a prime power. Let ${C}^{\perp}$ be the
dual of the code $C$. Then the dual code of the $q$-ary expansion
$\beta(C)$ of code $C$ with respect to the basis $\beta$ is the
$q$-ary expansion ${{\beta}^{\perp}}({C}^{\perp})$ of the dual
code ${C}^{\perp}$ with respect to ${\beta}^{\perp}$.
\end{lemma}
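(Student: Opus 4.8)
The plan is to verify the two inclusions $\beta^{\perp}(C^{\perp}) \subseteq \beta(C)^{\perp}$ and $\beta^{\perp}(C^{\perp}) \supseteq \beta(C)^{\perp}$, and then conclude equality by a dimension count. Since $C$ is an $[n,k,d]_{q^m}$ code, the expansion $\beta(C)$ is an $[mn, mk]_q$ code, so $\beta(C)^{\perp}$ has dimension $mn - mk = m(n-k)$; likewise $C^{\perp}$ has dimension $n-k$ over ${\mathbb F}_{q^m}$, so its expansion $\beta^{\perp}(C^{\perp})$ has dimension $m(n-k)$ over ${\mathbb F}_q$. Hence it suffices to prove a single inclusion, say $\beta^{\perp}(C^{\perp}) \subseteq \beta(C)^{\perp}$, and equality of dimensions will force equality of the codes.

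For the inclusion itself, the key computation is the following. Take ${\bf c} = (c_1, \ldots, c_n) \in C$ and ${\bf e} = (e_1, \ldots, e_n) \in C^{\perp}$, so that $\sum_{i=1}^{n} c_i e_i = 0$ in ${\mathbb F}_{q^m}$. Write each $c_i = \sum_{j=1}^{m} c_{ij} b_j$ with $c_{ij} \in {\mathbb F}_q$ (so $(c_{ij})_{i,j}$ is the generic element of $\beta(C)$) and each $e_i = \sum_{\ell=1}^{m} e_{i\ell} b_{\ell}^{\ast}$ with $e_{i\ell} \in {\mathbb F}_q$ (the generic element of $\beta^{\perp}(C^{\perp})$). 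Applying ${\rm tr}_{q^m/q}$ to the identity $\sum_i c_i e_i = 0$ and using ${\mathbb F}_q$-linearity of the trace together with the duality relation ${\rm tr}_{q^m/q}(b_j b_{\ell}^{\ast}) = \delta_{j\ell}$, one gets
\begin{eqnarray*}
0 = {\rm tr}_{q^m/q}\Bigl( \sum_{i=1}^{n} c_i e_i \Bigr) = \sum_{i=1}^{n} \sum_{j=1}^{m} \sum_{\ell=1}^{m} c_{ij} e_{i\ell} \, {\rm tr}_{q^m/q}(b_j b_{\ell}^{\ast}) = \sum_{i=1}^{n} \sum_{j=1}^{m} c_{ij} e_{ij},
\end{eqnarray*}
which is precisely the Euclidean inner product in ${\mathbb F}_q^{mn}$ of the expanded vectors. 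This shows every element of $\beta^{\perp}(C^{\perp})$ is orthogonal to every element of $\beta(C)$, i.e. $\beta^{\perp}(C^{\perp}) \subseteq \beta(C)^{\perp}$.

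I do not expect a serious obstacle here; the argument is essentially a bookkeeping exercise once the dual-basis trace identity is in place. The one point requiring a little care is making sure the coordinate indexing of the expanded vectors is consistent on both sides — that the $(i,j)$ entry of $\beta({\bf c})$ is paired with the $(i,j)$ entry of $\beta^{\perp}({\bf e})$ — which is exactly what the computation above checks, since the cross terms with $j \neq \ell$ vanish by the duality relation. Combining the inclusion with the dimension equality $\dim \beta(C)^{\perp} = m(n-k) = \dim \beta^{\perp}(C^{\perp})$ completes the proof.
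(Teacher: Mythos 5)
Your proof is correct. Note that the paper does not prove this lemma at all --- it is imported verbatim from \cite{Grassl:1999,Ashikhmin:2000,LaGuardia:2010} --- and your argument (the dual-basis trace identity giving the inclusion ${\beta}^{\perp}(C^{\perp})\subseteq {[\beta(C)]}^{\perp}$, combined with the dimension count $m(n-k)$ on both sides) is exactly the standard proof found in those references, so there is nothing to add.
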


Theorem~\ref{MAINI} presents a method to construct convolutional
codes by expanding classical codes:

\begin{theorem}\label{MAINI}
Suppose that there exists an $(n, k, \delta ; 1, d_{f}
{)}_{q^{m}}$ convolutional code $V$ derived from an ${[n, k^{*},
d]}_{q^{m}}$ code $C$, where $d_{f}\geq d^{\perp}$. Then there
exists an $(nm, km, m\delta ; 1, d_{f_{\beta}} {)}_{q}$
convolutional code, where $d_{f_{\beta}} \geq d^{\perp}$.
\end{theorem}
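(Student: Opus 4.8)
The plan is to track the block-code construction of Subsection~\ref{IIA} through the $q$-ary expansion and show that expanding the underlying block code yields exactly the convolutional code we want. So I would start with the data we are given: a unit-memory convolutional code $V$ over ${\mathbb F}_{q^m}$ built from an ${[n,k^{*},d]}_{q^{m}}$ code $C$ via a parity-check matrix $H = \left[\begin{smallmatrix} H_0 \\ H_1\end{smallmatrix}\right]$, producing $G(D) = {\tilde H}_0 + {\tilde H}_1 D$ with $\kappa = k$ rows. Fix a basis $\beta$ of ${\mathbb F}_{q^m}/{\mathbb F}_q$ and its dual basis $\beta^{\perp}$. The key idea is to expand $H$ row-block by row-block: apply the expansion map to each $H_i$ (equivalently, to the block code $C_i$ whose parity check matrix is ${\tilde H}_i$), obtaining ${\mathbb F}_q$-matrices ${\tilde H}_i^{(\beta)}$ of size $km \times nm$, and then form $G_\beta(D) = {\tilde H}_0^{(\beta)} + {\tilde H}_1^{(\beta)} D$.

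Next I would verify that $G_\beta(D)$ satisfies the hypotheses of Theorem~\ref{A}, so that it is a reduced basic generator matrix of an honest convolutional code with the claimed parameters $(nm, km, m\delta; 1, d_{f_\beta})_q$. The length $nm$ and the memory $1$ are immediate. For the dimension: since $H_0$ has rank $\kappa = k$ over ${\mathbb F}_{q^m}$, its $q$-ary expansion has rank $km$ over ${\mathbb F}_q$ — here I would invoke the standard fact (implicit in Lemma~\ref{dualcode} and its references) that expanding a full-rank matrix multiplies the rank by $m$, so $\operatorname{rk} H_0^{(\beta)} = km = \kappa_\beta$ and $\operatorname{rk} H_1^{(\beta)} \le km$. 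For the degree: each row of $G(D)$ over ${\mathbb F}_{q^m}$ of degree $1$ expands to $m$ rows over ${\mathbb F}_q$, and I must check that each such row still has degree exactly $1$ (this follows because the $i$-th row-block of $H_1$ is nonzero precisely when the corresponding constraint-length contribution $\delta_i = 1$, and expansion of a nonzero row over ${\mathbb F}_{q^m}$ is a nonzero block of rows over ${\mathbb F}_q$), giving overall constraint length $m\delta$. That $G_\beta(D)$ has a polynomial right inverse (basicness) then follows from Theorem~\ref{A} once the rank conditions are in place, or alternatively from expanding a polynomial right inverse of $G(D)$.

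For the free-distance bound, I would apply the inequality $d_f \ge d^{\perp}$ from Theorem~\ref{A} directly to the expanded code: $d_{f_\beta} \ge d\big((\beta(C))^{\perp}\big)$, where $\beta(C)$ is the block code with parity-check matrix $H^{(\beta)} = \left[\begin{smallmatrix} H_0^{(\beta)} \\ H_1^{(\beta)}\end{smallmatrix}\right]$ — note this $H^{(\beta)}$ is precisely the $q$-ary expansion of $H$, hence a parity-check matrix of $\beta(C)$. By Lemma~\ref{dualcode}, $(\beta(C))^{\perp} = \beta^{\perp}(C^{\perp})$, and since the expansion of a code never decreases minimum distance, $d\big(\beta^{\perp}(C^{\perp})\big) \ge d(C^{\perp}) = d^{\perp}$. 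Chaining these gives $d_{f_\beta} \ge d^{\perp}$, as claimed. The main obstacle I anticipate is the bookkeeping around the zero-row padding and the degree count: one has to be careful that the ${\tilde H}_i$ versus $H_i$ distinction interacts correctly with expansion — that padding with zero-rows over ${\mathbb F}_{q^m}$ and then expanding gives the same matrix (up to row permutation that does not affect the code) as expanding $H_i$ and then padding with zero-rows over ${\mathbb F}_q$ — so that the expanded construction genuinely fits the template of Subsection~\ref{IIA} with $\mu = 1$ and $\kappa_\beta = km$. Everything else is a routine transfer of the $m$-fold scaling of length, dimension, and degree through the ${\mathbb F}_q$-linear expansion map.
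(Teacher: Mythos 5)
Your proposal is correct and follows essentially the same route as the paper: expand the parity-check data of $C$ (equivalently the dual code) block-by-block, identify the expanded matrix as a parity check matrix of the expanded block code via Lemma~\ref{dualcode}, and read off the parameters $(nm, km, m\delta; 1, d_{f_{\beta}})_{q}$ and the bound $d_{f_{\beta}}\geq d^{\perp}$ from Theorem~\ref{A} together with the fact that expansion never decreases minimum distance. One cosmetic remark: the expansion of $H$ with respect to $\beta$ is a parity check matrix of ${\beta}^{\perp}(C)$ rather than of $\beta(C)$ (the paper sidesteps this by expanding $C^{\perp}$ with respect to the dual basis ${\beta}^{\perp}$), but since the expanded code's distance is at least $d^{\perp}$ for either choice of basis, your conclusion is unaffected.
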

\begin{proof}
Let $V$ be the convolutional code generated by the reduced basic
generator matrix $G(D)={H}_0 + {\tilde H}_1 D$, where $$H = \left[
\begin{array}{c}
H_0\\
H_1\\
\end{array}
\right]$$ is a parity check matrix of $C$ and $\operatorname{rk}
H_0 = k \geq \operatorname{rk} {H}_1 = \operatorname{rk} {\tilde
H}_1 = \delta$. We expand $C^{\perp}$ by the dual basis
${\beta}^{\perp}$ of ${\mathbb F}_{q^{m}}$ over ${\mathbb F}_{q}$,
generating the linear code ${\beta}^{\perp} (C^{\perp})$ with
parameters ${[mn, m(n-k^{*}), d_{\beta}\geq d^{\perp}]}_{q}$. From
Lemma~\ref{dualcode} we know that ${\beta}^{\perp} (C^{\perp}) =
{[\beta(C)]}^{\perp}$. We can split the generator matrix
$H_{\beta}$ of ${[\beta(C)]}^{\perp}$ in the form $$H_{\beta} =
\left[
\begin{array}{c}
H_0^{\beta}\\
H_1^{\beta}\\
\end{array}
\right],$$ where $\operatorname{rk} H_0^{\beta} = mk$ and
$\operatorname{rk} {H}_1^{\beta} = m\delta$, because
$mk+m\delta=m(n-k^{*})$. Consider the polynomial matrix
$${[G(D)]}_{\beta} =  {H}_0^{\beta} + {\tilde H}_1^{\beta} D.$$ From
construction and from Theorem~\ref{A}, ${[G(D)]}_{\beta}$ is a
reduced basic generator matrix of a convolutional code
$V_{\beta}$. From construction, $V_{\beta}$ has parameters $(nm,
km, m\delta ; 1, d_{f_{\beta}} {)}_{q}$. From Theorem~\ref{A}, we
have $d_{f_{\beta}}\geq d^{\perp}$, so the result follows. (Note
that the code $\beta(C)$ constructed in this proof plays the role
of the code $C$ given in Theorem~\ref{A}).
\end{proof}

\subsection{Direct Sum Codes}\label{sub3.2}

Let us recall the direct sum of codes. For more details the reader
can consult \cite{Macwilliams:1977,Huffman:2003} for example.
Assume that $C_1={[n_{1}, k_{1}, d_{1} ]}_{q}$ and $C_{2}={[n_{2},
k_{2}, d_{2}]}_{q}$ are two linear codes. Then the direct sum code
$C_{1}\oplus C_{2}$ is the linear code given by $C_{1}\oplus
C_{2}=\{ ({{\bf c}}_{1}, {{\bf c}}_{2})| {{\bf c}}_{1} \in C_{1},
{{\bf c}}_{2}\in C_{2} \}$ and has parameters $[n_1+n_{2},
k_{1}+k_{2}, \min$ $\{d_{1},d_{2}\} {]}_{q}$.

Let us now prove the main result of this subsection:

\begin{theorem}\label{MAINII}
Assume that there exist an $(n_1, k_1, {\delta}_1 ; 1, d_{f}^{(1)}
{)}_{q}$ convolutional code $V_1$ derived from an ${[n_{1},
k_{1}^{*}, d_{1} ]}_{q}$ code $C_1$, where $d_{f}^{(1)}\geq
d_{1}^{\perp}$, and an $(n_{2}, k_{2}, {\delta}_{2} ; 1,
d_{f}^{(2)} {)}_{q}$ convolutional code $V_{2}$ derived from an
${[n_{2}, k_{2}, d_{2}]}_{q}$ code $C_{2}$, where $d_{f}^{(2)}\geq
d_{2}^{\perp}$. Then there exists an $(n_1 + n_{2}, k_1 + k_{2},
{\delta}_{1}+ {\delta}_{2} ; 1, d_{\oplus}{)}_{q}$ convolutional
code $V_{\oplus}$, where $d_{\oplus}\geq \min\{d_{1}^{\perp},
d_{2}^{\perp}\}$.
\end{theorem}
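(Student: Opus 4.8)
The plan is to mimic the structure of the proof of Theorem~\ref{MAINI}, but using the direct sum of block codes in place of the $q$-ary expansion. First I would write down explicitly the block-code data underlying each convolutional code. Since $V_1$ is derived from $C_1$, by the construction of Subsection~\ref{IIA} it is generated by a reduced basic matrix $G_1(D) = H_0^{(1)} + \tilde H_1^{(1)} D$, where $H^{(1)} = \left[\begin{array}{c} H_0^{(1)} \\ H_1^{(1)} \end{array}\right]$ is a parity check matrix of $C_1$ with $\operatorname{rk} H_0^{(1)} = k_1$ and $\operatorname{rk} H_1^{(1)} = \operatorname{rk}\tilde H_1^{(1)} = \delta_1$; similarly $G_2(D) = H_0^{(2)} + \tilde H_1^{(2)} D$ with the analogous rank conditions for $C_2$, $k_2$, $\delta_2$. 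Next I would form the block code $C_\oplus = C_1 \oplus C_2$, which by the recalled facts has parameters $[n_1+n_2,\, k_1^* + k_2,\, \min\{d_1,d_2\}]_q$ and admits as a parity check matrix the block-diagonal matrix $H_\oplus = \left[\begin{array}{cc} H^{(1)} & 0 \\ 0 & H^{(2)} \end{array}\right]$.

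The key step is to partition $H_\oplus$ into two strips in a way compatible with Theorem~\ref{A}. I would set
\[
H_0^\oplus = \left[\begin{array}{cc} H_0^{(1)} & 0 \\ 0 & H_0^{(2)} \end{array}\right], \qquad
H_1^\oplus = \left[\begin{array}{cc} H_1^{(1)} & 0 \\ 0 & H_1^{(2)} \end{array}\right].
\]
By the block-diagonal shape and the individual rank hypotheses, $\operatorname{rk} H_0^\oplus = k_1 + k_2 =: \kappa$ and $\operatorname{rk} H_1^\oplus = \delta_1 + \delta_2 \le \kappa$ (the inequality holding because each $\delta_i \le k_i$ in a unit-memory construction). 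Padding $H_1^\oplus$ with zero rows to $\kappa$ rows gives $\tilde H_1^\oplus$, and I would then define $G_\oplus(D) = H_0^\oplus + \tilde H_1^\oplus D$. Applying Theorem~\ref{A} with $C_\oplus$ in the role of $C$, this is a reduced basic generator matrix, so it generates a convolutional code $V_\oplus$ with length $n_1+n_2$, dimension $k_1+k_2$, and degree $\delta_1 + \delta_2$, memory $1$ — exactly the claimed parameters. For the free-distance bound, Theorem~\ref{A} gives $d_{f_\oplus} \ge d^\perp(C_\oplus) = d(C_\oplus^\perp)$; since $C_\oplus^\perp = C_1^\perp \oplus C_2^\perp$, its minimum distance is $\min\{d_1^\perp, d_2^\perp\}$, which yields $d_\oplus \ge \min\{d_1^\perp, d_2^\perp\}$.

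I expect the main technical point to be verifying that $G_\oplus(D)$ genuinely arises from the Subsection~\ref{IIA} construction applied to $C_\oplus$, i.e. that $\tilde H_i^\oplus$ is obtained from $H_i^\oplus$ by appending zero rows at the bottom and that $\kappa$ equals the maximal row count among the $H_i^\oplus$ — here one should be a little careful about how the zero-padding of $\tilde H_1^{(1)}$ inside each block interacts with the global padding to $\kappa$ rows, but since appending zero rows commutes with forming block-diagonal matrices (up to a harmless row permutation that does not change the generated module or any distance), this is routine. A secondary point worth stating cleanly is the identity $(C_1\oplus C_2)^\perp = C_1^\perp \oplus C_2^\perp$ together with the minimum-distance formula for a direct sum, both of which are standard \cite{Macwilliams:1977,Huffman:2003}. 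No genuine obstacle arises; the proof is essentially a bookkeeping argument reducing everything to Theorem~\ref{A}.
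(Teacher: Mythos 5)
Your proposal is correct and follows essentially the same route as the paper's own proof: take the block-diagonal parity check matrix of $C_1\oplus C_2$, split it into the strips $H_0^{\oplus}$, $H_1^{\oplus}$ built from the $H_0^{(i)}$, $H_1^{(i)}$, and invoke Theorem~\ref{A} together with $d\bigl((C_1\oplus C_2)^{\perp}\bigr)=\min\{d_1^{\perp},d_2^{\perp}\}$ to get the parameters and the free-distance bound. Your remarks on the rank bookkeeping and the harmless row permutation in the zero-padding are points the paper passes over silently, but they do not change the argument.
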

\begin{proof}
We know that $V_1$ has a generator matrix of the form
$[G(D){]}_{1} =  {\tilde H}_{0}^{(1)} + {\tilde H}_{1}^{(1)} D$,
where $$H^{(1)} = \left[
\begin{array}{c}
H_0^{(1)}\\
H_1^{(1)}\\
\end{array}
\right]$$ is a parity check matrix of $C_1$ and $\operatorname{rk}
H_0^{(1)} = k_1 \geq \operatorname{rk} H_1^{(1)} =
\operatorname{rk} {\tilde H}_1^{(1)} = {\delta}_1$. The code
$V_{2}$ has a generator matrix of the form $[G(D){]}_{2} = {\tilde
H}_{0}^{(2)} + {\tilde H}_{1}^{(2)} D$, where $$H^{(2)} = \left[
\begin{array}{c}
H_0^{(2)}\\
H_1^{(2)}\\
\end{array}
\right]$$ is a parity check matrix of $C_{2}$ and
$\operatorname{rk} H_0^{(2)} = k_{2} \geq \operatorname{rk}
H_1^{(2)} = \operatorname{rk} {\tilde H}_1^{(2)} = {\delta}_{2}$.
Consider next the code $C_{\oplus}^{\perp}$ generated by
$$H_{\oplus} =\left[
\begin{array}{cc}
H^{(1)} & 0\\
0 & H^{(2)}\\
\end{array}
\right].$$ The minimum distance of $C_{\oplus}^{\perp}$ satisfies
$d(C_{\oplus}^{\perp})=\min\{d_{1}^{\perp}, d_{2}^{\perp}\}$. We
can construct a convolutional code $V_{\oplus}$ with generator
matrix $$[G(D){]}_{\oplus} =\left[
\begin{array}{cc}
H_{0}^{(1)} & 0\\
0 & H_{0}^{(2)}\\
\end{array}
\right]+ \left[
\begin{array}{cc}
{\tilde H}_{1}^{(1)} & 0\\
0 & {\tilde H}_{1}^{(2)}\\
\end{array}
\right]D.$$ From construction and from Theorem~\ref{A}
$[G(D){]}_{\oplus}$ is a reduced basic generator matrix. Again by
construction, $V_{\oplus}$ has parameters $(n_1 + n_{2}, k_1 +
k_{2}, {\delta}_{1}+ {\delta}_{2} ; 1, d_{\oplus}{)}_{q}$ and,
from Theorem~\ref{A}, $d_{\oplus}\geq \min\{d_{1}^{\perp},
d_{2}^{\perp}\}$. Therefore, the proof is complete.
\end{proof}

\subsection{Puncturing Codes}\label{sub3.3}

The technique of puncturing codes is well-known in the literature
\cite{Macwilliams:1977,Huffman:2003}. If $C$ be an ${[n, k,
d]}_{q}$ code then we denote by $C^{i}$ the punctured code in the
coordinate $i$. Recall that the dual of a punctured code is a
shortened code. Let us show the first result of this subsection:

\begin{theorem}\label{MAINIII}
Assume that there exists an $(n, k, \delta ; 1, d_{f} {)}_{q}$
convolutional code $V$ derived from an ${[n, k^{*}, d ]}_{q}$ code
$C$, where $d^{\perp}=d(C^{\perp})> 1$ and $d_{f}\geq d^{\perp}$.
Then the following hold:
\begin{enumerate}
\item [ (i)] If $C^{\perp}$ has a minimum weight codeword with a
nonzero $i$th coordinate then there exists an $(n-1, k, \delta ;
1, d_{f_{i}} {)}_{q}$ convolutional code $V_i$, where $d_{f_{i}}
\geq d^{\perp} -1$;

\item [ (ii)] If $C^{\perp}$ has no minimum weight codeword with a
nonzero $i$th coordinate, then there exists an $(n-1, k, \delta ;
1, d_{f_{i}} {)}_{q}$ convolutional code $V_i$, where $d_{f_{i}}
\geq d^{\perp}$.
\end{enumerate}
\end{theorem}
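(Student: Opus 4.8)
The plan is to mimic the proof of Theorem~\ref{MAINI}, transferring the operation ``puncture $C$ at coordinate $i$'' through the parity-check-matrix description used to build the convolutional code $V$. Recall that $V$ is generated by a reduced basic matrix $G(D) = H_0 + {\tilde H}_1 D$, where $H = \bigl[\begin{smallmatrix} H_0 \\ H_1 \end{smallmatrix}\bigr]$ is a parity check matrix of $C$ with $\operatorname{rk} H_0 = k$ and $\operatorname{rk} H_1 = \operatorname{rk}{\tilde H}_1 = \delta$. Since $H$ is a generator matrix of $C^{\perp}$, deleting the $i$th column of $H$ produces a matrix whose row space is exactly the punctured code $(C^{\perp})^{i}$, which by the standard duality is $\bigl[C_{(i)}\bigr]^{\perp}$, the dual of the code $C$ shortened at coordinate $i$ --- but for our purposes the relevant point is simply that the new block code $\hat C := C^{i}$ (puncture $C$ itself at $i$) has parity check matrix $\hat H$ obtained from $H$ by deleting column $i$, and that $d(\hat C^{\perp})$ equals $d^{\perp}-1$ in case (i) and $d^{\perp}$ in case (ii), by the well-known behaviour of puncturing on minimum distance (a minimum weight codeword survives with weight lowered by one precisely when it had a nonzero $i$th coordinate).

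First I would verify the rank conditions needed to invoke Theorem~\ref{A} for the punctured data. Write $\hat H = \bigl[\begin{smallmatrix} \hat H_0 \\ \hat H_1 \end{smallmatrix}\bigr]$ where $\hat H_0, \hat H_1$ are $H_0, H_1$ with column $i$ removed. Deleting a single column can drop the rank of a matrix by at most one, so $\operatorname{rk}\hat H_0 \leq k$; the hypothesis $d^{\perp} > 1$ is exactly what guarantees that no coordinate of $C^{\perp}$ is identically zero on the whole code, hence in particular column $i$ of $H_0$ is not in... more carefully: I want $\operatorname{rk}\hat H_0 = k$ still, i.e. the $k$ rows of $H_0$ remain independent after deleting column $i$. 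This holds because the only way independence could fail is if some nonzero linear combination of the rows of $H_0$ is supported solely on coordinate $i$, which would give a weight-one codeword of $C^{\perp}$, contradicting $d^{\perp} > 1$. Likewise $\operatorname{rk}\hat H_1 \leq \operatorname{rk} H_1 = \delta \leq k = \operatorname{rk}\hat H_0$, so the hypotheses ``$\kappa = \operatorname{rk}\hat H_0$ and $\operatorname{rk}\hat H_1 \leq \kappa$'' of Theorem~\ref{A} are met with $\kappa = k$. Set ${\tilde{\hat H}}_1$ to be $\hat H_1$ padded with zero rows to $k$ rows, and let $G_i(D) := \hat H_0 + {\tilde{\hat H}}_1 D$. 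By Theorem~\ref{A}, $G_i(D)$ is a reduced basic generator matrix of a convolutional code $V_i$ with $n-1$ columns, dimension $k$, degree $\delta$, and memory $1$.

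It then remains to pin down the free distance bound. By Theorem~\ref{A} applied to the block code $\hat C = C^{i}$ and its parity check matrix $\hat H$, the free distance $d_{f_i}$ of $V_i$ satisfies $d_{f_i} \geq d(\hat C^{\perp}) = d\bigl((C^{\perp})^{i}\bigr)$. In case (i), some minimum weight codeword of $C^{\perp}$ has nonzero $i$th coordinate, so puncturing at $i$ lowers its weight to $d^{\perp}-1$ and no codeword can go lower, giving $d\bigl((C^{\perp})^{i}\bigr) = d^{\perp}-1$ and hence $d_{f_i} \geq d^{\perp}-1$. In case (ii), every minimum weight codeword of $C^{\perp}$ is zero at coordinate $i$, so those codewords keep weight $d^{\perp}$ after puncturing and nothing drops below $d^{\perp}$; thus $d\bigl((C^{\perp})^{i}\bigr) = d^{\perp}$ and $d_{f_i} \geq d^{\perp}$. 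This yields the claimed parameters $(n-1, k, \delta; 1, d_{f_i})_q$ in both cases.

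The main obstacle --- really the only delicate point --- is the rank bookkeeping after deleting a column, specifically confirming $\operatorname{rk}\hat H_0 = k$ (so that the degree and dimension are genuinely preserved rather than dropping) and checking that $\operatorname{rk}\hat H_1$ staying $\leq k$ is automatic; the hypothesis $d^{\perp} > 1$ is precisely tailored to make the first of these go through, and everything else is an application of Theorem~\ref{A} together with the elementary minimum-distance behaviour of puncturing on the dual code.
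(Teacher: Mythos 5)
Your proof is correct and follows essentially the same route as the paper: delete column $i$ from the partitioned parity check matrix, use $d^{\perp}>1$ to argue that ranks survive the deletion, and invoke Theorem~\ref{A} with the minimum distance of the punctured dual as the free-distance bound. Two small touch-ups: the block code whose parity check matrix is $\hat H$ is the shortened code $C_{S_i}$, not the punctured code $C^{i}$ (precisely because $(C^{\perp})^{i}=[C_{S_i}]^{\perp}$, as the paper itself remarks at the end of its proof), although the quantity you actually feed into Theorem~\ref{A}, namely $d\bigl((C^{\perp})^{i}\bigr)$, is the correct one; and the weight-one argument you give for $H_0$ should also be applied to $H_1$ to get $\operatorname{rk}\hat H_1=\operatorname{rk}H_1=\delta$ rather than merely $\leq\delta$, since that equality is what guarantees the new code's degree is exactly $\delta$ (the paper verifies this explicitly).
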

\begin{proof}
We only show item (ii) since item (i) is similar. Let us consider
$$H = \left[
\begin{array}{c}
H_0\\
H_1\\
\end{array}
\right]$$ be a parity check matrix of $C$. Then we know that a
reduced basic generator matrix for $V$ is given by $G(D) = {H}_0 +
{\tilde H}_1 D$, where $\operatorname{rk} H_0 = k \geq
\operatorname{rk} {H}_1 = \operatorname{rk} {\tilde H}_1 =
\delta$. Since $d^{\perp} > 1$ and $C^{\perp}$ has no minimum
weight codeword with a nonzero $i$th coordinate, puncturing
$C^{\perp}$ in the $i$th coordinate one obtains the code
$C_{i}^{\perp}$ with parameters ${[n-1, n-k^{*}, d^{\perp}
]}_{q}$, where $n-k^{*}=\operatorname{rk} H_0 + \operatorname{rk}
H_1$. Consider next the convolutional code $V_i$ with generator
matrix $$[G(D){]}_{i} = {H}_0^{(i)} + {\tilde H}_1^{(i)} D,$$
where ${H}_0^{(i)}$ is derived from $H_0$ by removing column $i$
(and omitting possibly zero or duplicate rows), and ${\tilde
H}_1^{(i)}$ is derived from $H_1$ by removing column $i$ (and
omitting possibly zero or duplicate rows) and adding zero rows if
necessary (according to Theorem~\ref{A}). Since the code
$C_0^{\perp}$ generated by the matrix $H_0$ also has minimum
distance $d(C_{0}^{\perp}) > 1$ then $\operatorname{rk} H_0^{(i)}
= \operatorname{rk} H_0 = k$. By the same reasoning,
$\operatorname{rk} H_1^{(i)} = \operatorname{rk} H_1 = \delta$.
From construction and from Theorem~\ref{A}, ${[G(D)]}_{i}$ is a
reduced basic generator matrix; again by construction, the
convolutional code $V_i$ has parameters $(n-1, k, \delta ; 1,
d_{f_{i}} {)}_{q}$. Applying Theorem~\ref{A}, it follows that
$d_{f_{i}}\geq d^{\perp}$, so we get the result. Note that since
$(C_{i}^{\perp})={[C_{S_{i}}]}^{\perp}$ (where the latter code is
the dual of the shortened code), the code $C_{S_{i}}$ constructed
in this proof plays the role of the code $C$ given in
Theorem~\ref{A}.
\end{proof}

\begin{remark}
Note that the procedure adopted in Theorem~\ref{MAINIII} can be
generalized to puncture codes on two or more coordinates.
\end{remark}

Let us see how to obtain convolutional codes by puncturing
memory-two and memory-three convolutional codes:

\begin{theorem}\label{MAINIIIP}
Assume that there exists an $(n, k, \delta ; 2, d_{f} {)}_{q}$
convolutional code $V$ derived from an ${[n, k^{*}, d ]}_{q}$ code
$C$, where $d^{\perp}=d(C^{\perp})> 1$ and $d_{f}\geq d^{\perp}$.
Then the following hold:
\begin{enumerate}
\item [ (i)] If $C^{\perp}$ has a minimum weight codeword with a
nonzero $i$th coordinate then there exists an $(n-1, k, \delta ;
2, d_{f_{i}} {)}_{q}$ convolutional code $V_i$, where $d_{f_{i}}
\geq d^{\perp} -1$;

\item [ (ii)] If $C^{\perp}$ has no minimum weight codeword with a
nonzero $i$th coordinate, then there exists an $(n-1, k, \delta ;
2, d_{f_{i}} {)}_{q}$ convolutional code $V_i$, where $d_{f_{i}}
\geq d^{\perp}$.
\end{enumerate}
\end{theorem}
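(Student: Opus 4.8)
The plan is to imitate the proof of Theorem~\ref{MAINIII} almost verbatim, the only change being that the reduced basic generator matrix now carries three terms. Following Subsection~\ref{IIA}, write a parity check matrix of $C$ as $$H = \left[ \begin{array}{c} H_0 \\ H_1 \\ H_2 \end{array} \right],$$ so that $V$ is generated by $G(D) = {\tilde H}_0 + {\tilde H}_1 D + {\tilde H}_2 D^2$ with $\kappa = \operatorname{rk} H_0 = k$ and $\operatorname{rk} H_j \le \kappa$ for $j = 1, 2$ (these are precisely the hypotheses under which Theorem~\ref{A} produces $V$). As in Theorem~\ref{MAINIII}, I would prove item (ii) and note that item (i) follows the same way with the minimum distance of the punctured dual code decreased by one.

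Next, puncture $C^{\perp}$ at coordinate $i$. Since $d^{\perp} > 1$ and, in case (ii), $C^{\perp}$ has no minimum weight codeword with nonzero $i$th coordinate, the punctured code $C_{i}^{\perp}$ has parameters ${[n-1, n-k^{*}, d^{\perp}]}_{q}$, where $n - k^{*} = \operatorname{rk} H_0 + \operatorname{rk} H_1 + \operatorname{rk} H_2$. Form the candidate generator matrix $$[G(D)]_i = H_0^{(i)} + {\tilde H}_1^{(i)} D + {\tilde H}_2^{(i)} D^2,$$ where each $H_j^{(i)}$ is obtained from $H_j$ by deleting column $i$, discarding any resulting zero or duplicate rows and, for $j = 1, 2$, appending zero rows as prescribed in Theorem~\ref{A}.

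The key point --- and the only place where a genuine argument is required --- is that deleting column $i$ does not drop any of the three ranks. For $H_0$ this is precisely the argument used in Theorem~\ref{MAINIII}: the row space of $H_0$ is a code of minimum distance $> 1$, hence has no weight-one codeword, so puncturing that row space at coordinate $i$ preserves its dimension, giving $\operatorname{rk} H_0^{(i)} = \operatorname{rk} H_0 = k$. For $H_1$ and $H_2$ the same reasoning applies once one observes that the row space of $H_j$ is a subcode of the row space of $H$, which is $C^{\perp}$ (a row combination of $H_j$ is a row combination of $H$ with zero coefficients on the other two blocks); hence the row space of $H_j$ has minimum distance at least $d^{\perp} > 1$, and therefore $\operatorname{rk} H_j^{(i)} = \operatorname{rk} H_j$ for $j = 1, 2$. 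In particular $\kappa = \operatorname{rk} H_0^{(i)}$ and $\operatorname{rk} H_j^{(i)} \le \kappa$ still hold and $H_2^{(i)} \neq 0$, so the memory remains $2$.

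Finally, since the dual of a punctured code is a shortened code, write $C_i^{\perp} = [C_{S_i}]^{\perp}$ and apply Theorem~\ref{A} to $C_{S_i}$: it follows that $[G(D)]_i$ is a reduced basic generator matrix of a convolutional code $V_i$ with parameters $(n-1, k, \delta; 2, d_{f_i})_q$, and that $d_{f_i} \ge d([C_{S_i}]^{\perp}) = d^{\perp}$, which is item (ii); for item (i), the punctured dual code has minimum distance $d^{\perp} - 1$, yielding $d_{f_i} \ge d^{\perp} - 1$. I do not expect any real obstacle beyond carefully checking that the three rank conditions of Theorem~\ref{A} survive the puncturing and that the construction keeps the degree equal to $\delta$; the memory-three case is handled identically with the four-term matrix ${\tilde H}_0 + {\tilde H}_1 D + {\tilde H}_2 D^2 + {\tilde H}_3 D^3$.
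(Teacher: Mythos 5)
Your proof is correct and follows essentially the same route as the paper's: puncture $C^{\perp}$, delete column $i$ from each block $H_0, H_1, H_2$, show the three ranks are preserved, and invoke Theorem~\ref{A} via $C_i^{\perp} = {[C_{S_i}]}^{\perp}$. The only differences are cosmetic: you make explicit the rank-preservation argument for $H_1, H_2$ (their row spaces are subcodes of $C^{\perp}$, hence have no weight-one codewords), which the paper dispatches with ``by the same reasoning,'' and by observing that all three ranks survive you obtain ${\delta}^{(i)} = \delta$ uniformly, whereas the paper computes the degree by splitting into the two rank-order cases $\operatorname{rk} H_2 \geq \operatorname{rk} H_1$ and $\operatorname{rk} H_1 \geq \operatorname{rk} H_2$.
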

\begin{proof}
We only prove item (ii); item (i) follows similarly. Let us
consider
$$H = \left[
\begin{array}{c}
H_0\\
H_1\\
H_{2}\\
\end{array}
\right]$$ be a parity check matrix of $C$. Then a reduced basic
generator matrix for $V$ is given by $$G(D) = {H}_0 + {\tilde H}_1
D + {\tilde H}_{2} D^{2}.$$ We have two cases to be considered.
The first one is $\operatorname{rk} H_0 \geq \operatorname{rk}
{H}_{2} \geq \operatorname{rk} {H}_1$. In this case it follows
that $\operatorname{rk} H_0 = k$ and $\delta = 2\operatorname{rk}
{H}_{2}$ and the computation of the parameters of the final code
is performed without pain.

Let us investigate the second case, i.e., $\operatorname{rk} H_0
\geq \operatorname{rk} {H}_{1} \geq \operatorname{rk} {H}_{2}$. In
the second case we have $\operatorname{rk} H_0 = k$ and $\delta =
\operatorname{rk} {H}_{2}+\operatorname{rk} {H}_{1}$. Since
$d^{\perp} > 1$ and $C^{\perp}$ has no minimum weight codeword
with a nonzero $i$th coordinate, puncturing $C^{\perp}$ in the
$i$th coordinate one obtains the code $C_{i}^{\perp}$ with
parameters ${[n-1, n-k^{*}, d^{\perp} ]}_{q}$, where
$n-k^{*}=\operatorname{rk} H_0 + \operatorname{rk} H_1 +
\operatorname{rk} H_{2}$. Let us consider the code $V_i$ with
generator matrix
$$[G(D){]}_{i} = {H}_0^{(i)} + {\tilde H}_1^{(i)} D + + {\tilde H}_{2}^{(i)} D^{2},$$ where
${H}_0^{(i)}$, ${H}_1^{(i)}$ and ${H}_{2}^{(i)}$ are derived from
$H_0$, $H_1$, $H_{2}$, respectively, by removing column $i$ (and
omitting possibly zero or duplicate rows). Since the code
$C_0^{\perp}$ generated by the matrix $H_0$ also has minimum
distance $d(C_{0}^{\perp}) > 1$ then $\operatorname{rk} H_0^{(i)}
= \operatorname{rk} H_0 = k$. By the same reasoning it follows
that $\operatorname{rk} H_1^{(i)} = \operatorname{rk} H_1$ and
$\operatorname{rk} H_{2}^{(i)} = \operatorname{rk} H_{2}$. From
construction and from Theorem~\ref{A}, ${[G(D)]}_{i}$ is a reduced
basic generator matrix of a memory-two convolutional code $V_i$ of
dimension $k$. Since $\operatorname{rk} H_1^{(i)}\geq
\operatorname{rk} H_{2}^{(i)}$ it follows that the degree
${\delta}^{(i)}$ of $V_{i}$ is given by ${\delta}^{(i)} =
2\operatorname{rk} {H}_{2}^{(i)}+(\operatorname{rk}
{H}_{1}^{(i)}-\operatorname{rk} {H}_{2}^{(i)})= \delta$. Therefore
the convolutional code $V_i$ has parameters $(n-1, k, \delta ; 2,
d_{f_{i}} {)}_{q}$. Applying Theorem~\ref{A}, it follows that
$d_{f_{i}}\geq d^{\perp}$, and the result follows.
\end{proof}

\begin{theorem}\label{MAINIIIPA}
Assume that there exists an $(n, k, \delta ; 3, d_{f} {)}_{q}$
convolutional code $V$ derived from an ${[n, k^{*}, d ]}_{q}$ code
$C$, where $d^{\perp}=d(C^{\perp})> 1$ and $d_{f}\geq d^{\perp}$.
Then the following hold:
\begin{enumerate}
\item [ (i)] If $C^{\perp}$ has a minimum weight codeword with a
nonzero $i$th coordinate then there exists an $(n-1, k, \delta ;
3, d_{f_{i}} {)}_{q}$ convolutional code $V_i$, where $d_{f_{i}}
\geq d^{\perp} -1$;

\item [ (ii)] If $C^{\perp}$ has no minimum weight codeword with a
nonzero $i$th coordinate, then there exists an $(n-1, k, \delta ;
3, d_{f_{i}} {)}_{q}$ convolutional code $V_i$, where $d_{f_{i}}
\geq d^{\perp}$.
\end{enumerate}
\end{theorem}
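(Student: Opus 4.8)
The plan is to follow the template of the proofs of Theorems~\ref{MAINIII} and~\ref{MAINIIIP}, now with a parity check matrix of $C$ partitioned into four blocks $H_0,H_1,H_2,H_3$. As in those proofs I would treat item~(ii) in detail, since item~(i) is obtained from it by the single observation that puncturing $C^{\perp}$ at a coordinate carrying a minimum weight codeword lowers the minimum distance of the punctured code by exactly one. Write $G(D)={H}_0+{\tilde H}_1 D+{\tilde H}_2 D^2+{\tilde H}_3 D^3$ for the reduced basic generator matrix of $V$ supplied by Theorem~\ref{A}, with $\operatorname{rk}H_0=k$ and $r_j:=\operatorname{rk}H_j\leq k$ for $j\in\{1,2,3\}$, and form $[G(D)]_i={H}_0^{(i)}+{\tilde H}_1^{(i)} D+{\tilde H}_2^{(i)} D^2+{\tilde H}_3^{(i)} D^3$, where $H_j^{(i)}$ is obtained from $H_j$ by deleting the $i$th column and discarding any resulting zero or duplicated rows, padding with zero rows at the bottom as Theorem~\ref{A} requires.

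The first step is routine: since $d^{\perp}=d(C^{\perp})>1$ there is no weight-one word in $C^{\perp}$, so puncturing $C^{\perp}$ at coordinate $i$ changes neither its dimension nor, under the hypothesis of item~(ii), its minimum distance, yielding $C_i^{\perp}={[n-1,\,n-k^{*},\,d^{\perp}]}_q$ with $n-k^{*}=\operatorname{rk}H_0+\operatorname{rk}H_1+\operatorname{rk}H_2+\operatorname{rk}H_3$. The key step is to show that deleting column $i$ preserves the rank of every block. For each $j$ the row space of $H_j$ is contained in the row space of $H$, that is, in $C^{\perp}$, so it has minimum distance at least $d^{\perp}>1$ and hence contains no standard basis vector $e_i$; therefore removing column $i$ cannot drop $\operatorname{rk}H_j$, and in particular $\operatorname{rk}H_0^{(i)}=k$. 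Consequently the blocks $H_0^{(i)},\dots,H_3^{(i)}$ satisfy the hypotheses of Theorem~\ref{A}, so $[G(D)]_i$ is a reduced basic generator matrix of a memory-three convolutional code $V_i$ of dimension $k$.

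It remains to verify that the degree is unchanged. The degree of the code produced by Theorem~\ref{A} from blocks of ranks $r_0=k\geq r_1,r_2,r_3$ equals $\sum_{\ell=1}^{k}\max\{\,j:\,r_j\geq\ell\,\}$, which depends only on the four ranks; since these are preserved by the deletion of column $i$, we obtain $\delta^{(i)}=\delta$. If one prefers the explicit bookkeeping carried out in the proof of Theorem~\ref{MAINIIIP}, this splits according to the relative order of $r_1,r_2,r_3$: for example when $k\geq r_1\geq r_2\geq r_3$ one finds $\delta=3r_3+2(r_2-r_3)+(r_1-r_2)=r_1+r_2+r_3$, and an analogous identity holds in each of the remaining orderings. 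Thus $V_i$ has parameters $(n-1,k,\delta;3,d_{f_i})_q$; and since $C_i^{\perp}=[C_{S_i}]^{\perp}$, the shortened code $C_{S_i}$ plays the role of $C$ in Theorem~\ref{A}, so that theorem gives $d_{f_i}\geq d(C_i^{\perp})=d^{\perp}$, which is item~(ii). Item~(i) follows verbatim with $d^{\perp}$ replaced by $d^{\perp}-1$.

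I expect the only real obstacle to be organizational rather than substantive: one must make sure the case analysis on the relative orders of $\operatorname{rk}H_1$, $\operatorname{rk}H_2$ and $\operatorname{rk}H_3$ is genuinely exhaustive (six orderings here, against two in the memory-two case) and that the degree evaluates to the same value before and after puncturing in each of them. The remark that the degree depends only on the block ranks, all of which are preserved under puncturing, is what removes the need to treat these cases one by one and keeps the proof short.
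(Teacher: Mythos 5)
Your proposal is correct and follows the paper's route: puncture each block $H_j$ of the parity check matrix, observe that $d^{\perp}>1$ (and the item-(ii) hypothesis) forces the ranks and the dual distance to survive the deletion of column $i$, and then invoke Theorem~\ref{A} for the distance bound, so that the only nontrivial check is that the degree is unchanged. The one place you diverge is in that degree check: the paper carries out the explicit six-case bookkeeping over the orderings of $\operatorname{rk}H_1,\operatorname{rk}H_2,\operatorname{rk}H_3$ (this case analysis is essentially the whole of its proof, the rest being deferred to the memory-one and memory-two arguments), whereas you observe once and for all that the degree produced by Theorem~\ref{A} equals $\sum_{\ell=1}^{k}\max\{j:\operatorname{rk}H_j\geq\ell\}$ and hence depends only on the block ranks, all preserved under puncturing. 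Your formula subsumes the paper's six cases (it reproduces, e.g., $\delta=3r_3+2(r_2-r_3)+(r_1-r_2)$ when $r_1\geq r_2\geq r_3$) and removes any worry about exhaustiveness, while the paper's enumeration has the minor virtue of displaying the explicit value of $\delta$ in each ordering; you also spell out the rank-preservation argument (each row space of $H_j$ is a subcode of $C^{\perp}$, hence contains no weight-one vector) more explicitly than the paper, which simply refers back to the previous proofs. No gap.
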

\begin{proof}
 Let us consider
$$H = \left[
\begin{array}{c}
H_0\\
H_1\\
H_{2}\\
H_3\\
\end{array}
\right]$$ be a parity check matrix of $C$. We only compute the
degree of the corresponding convolutional code $V_{i}$ since the
computation of the other parameters of $V_{i}$ are similar as in
the above proof. We have six cases. The cases $\operatorname{rk}
{H}_{3} \geq \operatorname{rk} {H}_{2} \geq \operatorname{rk}
{H}_{1}$ and $\operatorname{rk} {H}_{3} \geq \operatorname{rk}
{H}_{1} \geq \operatorname{rk} {H}_{2}$ are straightforward.

If the inequalities $\operatorname{rk} {H}_{2} \geq
\operatorname{rk} {H}_{3} \geq \operatorname{rk} {H}_{1}$ or
$\operatorname{rk} {H}_{2} \geq \operatorname{rk} {H}_{1} \geq
\operatorname{rk} {H}_{3}$ are true then it implies that $\delta
=3\operatorname{rk} {H}_{3} + 2(\operatorname{rk} {H}_{2}-
\operatorname{rk} {H}_{3})={\delta}^{(i)}$.

If the inequalities $\operatorname{rk} {H}_{1} \geq
\operatorname{rk} {H}_{3} \geq \operatorname{rk} {H}_{2}$ hold
then it follows that $\delta =3\operatorname{rk} {H}_{3} +
(\operatorname{rk} {H}_{1}- \operatorname{rk}
{H}_{3})={\delta}^{(i)}$.

Finally, if the inequalities $\operatorname{rk} {H}_{1} \geq
\operatorname{rk} {H}_{2} \geq \operatorname{rk} {H}_{3}$ are true
then one has $\delta =3\operatorname{rk} {H}_{3} +
2(\operatorname{rk} {H}_{2}- \operatorname{rk} {H}_{3}) +
(\operatorname{rk} {H}_{1}- \operatorname{rk} {H}_{2})
={\delta}^{(i)}$.
\end{proof}

\subsection{Code Extension}\label{sub3.4}

As in the case of puncturing codes, the code extension technique
\cite{Macwilliams:1977,Huffman:2003} is well known in the
literature. Here we extend this technique to convolutional codes.

Let $C$ be an ${[n, k, d]}_{q}$ linear code over ${\mathbb
F}_{q}$. The extended code $C^{e}$ is the linear code given by
$C^{e} = \{ (x_1,\ldots, x_{n}, x_{n+1}) \in {\mathbb F}_{q}^{n+1}
|$ $(x_1,\ldots, x_{n})\in C, x_1 +\cdots +x_{n}+ x_{n+1}=0 \}$.
The code $C^{e}$ is linear and has parameters ${[n+1, k,
d^{e}]}_{q}$, where $d^{e}=d$ or $d^{e}=d+1$. Recall that a vector
${ \bf v}=(v_1, \ldots, v_{n}) \in F_{q}^{n}$ is called
\emph{even-like} if it satisfies the equality
$\displaystyle\sum_{i=1}^{n} v_{i}=0$, and \emph{odd-like}
otherwise. For an ${[n, k, d]}_{q}$ code $C$, the minimum weight
of the even-like codewords of $C$ is called \emph{minimum
even-like weight} and it is denoted by $d_{even}$ (or
${(d)}_{even}$). Similarly, the minimum weight of the odd-like
codewords of $C$ is called \emph{minimum odd-like weight} and it
is denoted by $d_{odd}$ (or ${(d)}_{odd}$).

Theorem~\ref{MAINIV} is the main result of this subsection:

\begin{theorem}\label{MAINIV}
Suppose that an $(n, k, \delta ; 1, d_{f} {)}_{q^{m}}$
convolutional code $V$ derived from an ${[n, k^{*},
d^{\perp}]}_{q}$ code $C^{\perp}$ with $d_{f}\geq d$ exists. Then
the following hold:
\begin{itemize}
\item[ (i)] If ${(d)}_{even}\leq {(d)}_{odd}$, then there exists
an $(n+1, k, \delta ; 1, d_{f^{e}}{)}_{q}$ convolutional code,
where $d_{f^{e}} \geq d$;

\item[ (ii)] If ${(d)}_{odd} < {(d)}_{even}$, then there exists an
$(n+1, k, \delta ; 1, d_{f^{e}} {)}_{q}$ convolutional code, where
$d_{f^{e}} \geq d+1$
\end{itemize}
\end{theorem}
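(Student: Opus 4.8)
The plan is to mimic the structure of the proof of Theorem~\ref{MAINI}, replacing the field-expansion step with the code-extension step. We start with the convolutional code $V$ derived from $C^{\perp} = {[n, k^{*}, d^{\perp}]}_q$ via a partition of a parity check matrix of $C^{\perp}$ (which is a generator-type matrix for $C$) into two blocks, so that $G(D) = H_0 + \tilde H_1 D$ with $\operatorname{rk} H_0 = k$ and $\operatorname{rk} H_1 = \operatorname{rk} \tilde H_1 = \delta$. The key observation is that in Theorem~\ref{MAINI}'s setup the role of ``the block code $C$'' of Theorem~\ref{A} is played by $C$ itself (whose dual is $C^{\perp}$), and here we want to replace $C$ by its dual's relationship to the \emph{extended} code. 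Concretely, I would first recall the standard fact that $(C^{\perp})^{e}$, the extended code of $C^{\perp}$, satisfies $[(C^{\perp})^{e}]^{\perp} = (C)^{*}$, where $(C)^{*}$ is the code obtained from $C$ by adding an overall parity check column consisting of all ones appended with the appropriate bookkeeping — more precisely, extending $C^{\perp}$ by a parity check corresponds to augmenting a generator matrix of $C$ by an all-ones row (or, dually, shortening/puncturing relations). I would state this cleanly as: if $H = \left[\begin{array}{c} H_0 \\ H_1 \end{array}\right]$ is a parity check matrix of $C^{\perp}$ (equivalently a generator matrix of $C$), then a parity check matrix of $(C^{\perp})^{e}$ is obtained by appending an all-ones row to $H$ extended by a zero column plus a single $1$ in the new coordinate, and the dual of $(C^{\perp})^{e}$ is a ${[n+1, n+1-k^{*}, \hat d]}_q$ code with $\hat d \geq d$ (case (i)) or $\hat d \geq d+1$ (case (ii)), the dichotomy being exactly the even-like/odd-like distinction for $C$.

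Next I would carry out the matrix construction. Write $H^{e}$ for the parity check matrix of $(C^{\perp})^{e}$ and split it as $H^{e} = \left[\begin{array}{c} H_0^{e} \\ H_1^{e} \end{array}\right]$, placing the extra all-ones/parity row in the top block $H_0^{e}$ (so that it contributes to the memory-zero part of the encoder). Then $\operatorname{rk} H_0^{e} = k+1$ or $k$ depending on whether the all-ones functional is independent of the rows of $H_0$; in either case one checks $\operatorname{rk} H_1^{e} = \operatorname{rk} H_1 = \delta$ since the extra column (the new coordinate) is zero on $H_1$. Then set $[G(D)]^{e} = H_0^{e} + \tilde H_1^{e} D$, invoke Theorem~\ref{A} to conclude this is a reduced basic generator matrix, read off the parameters $(n+1, k, \delta; 1, d_{f^{e}})_q$ (the dimension staying $k$ is forced by how the block partition feeds into the convolutional construction — the new row goes into $H_0$ but the convolutional dimension is controlled by $\operatorname{rk} H_1$, i.e.\ by $\delta$, relative to the full parity check; I need to double-check the dimension bookkeeping here against the precise conventions of Subsection~\ref{IIA}), and finally apply the free-distance bound $d_{f^{e}} \geq d((C^{\perp})^{e})^{\perp \perp}$-type inequality from Theorem~\ref{A} together with the distance facts about the extended code to get $d_{f^{e}} \geq d$ in case (i) and $d_{f^{e}} \geq d+1$ in case (ii).

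The main obstacle I anticipate is bookkeeping, not ideas: specifically, making sure that (a) the dimension of the resulting convolutional code is genuinely $k$ and not $k+1$ — this hinges on whether the appended all-ones row lands in the $\operatorname{rk} H_0$ count or forces an adjustment, and the cleanest fix is probably to append the parity-check row to $H_1$'s level or to argue that it is already dependent on $H_0$'s rows when $C^{\perp}$ has no all-ones-orthogonal codeword issue; and (b) correctly translating the even-like vs.\ odd-like condition on $C$ into the $\hat d \geq d$ vs.\ $\hat d \geq d+1$ statement for $[(C^{\perp})^{e}]^{\perp}$. For (b) the key identity is that a codeword of $[(C^{\perp})^{e}]^{\perp}$ restricted to the first $n$ coordinates lies in $C$, and its last coordinate equals (up to sign) the sum of its first $n$ coordinates; hence minimum-weight codewords of weight $d$ in $C$ survive with the same weight iff they are even-like, and otherwise pick up the extra coordinate. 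I would make that argument explicitly in one short paragraph, then let Theorem~\ref{A} do the rest.
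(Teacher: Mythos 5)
Your overall strategy (choose a block code, split a suitable matrix, invoke Theorem~\ref{A}, and read the free-distance bound off the dual of that block code) is the right template, but the block code you choose makes the argument fail on both counts. You apply the construction to $(C^{\perp})^{e}$, so the bound Theorem~\ref{A} can give you is $d_{f^{e}} \geq d\bigl({[(C^{\perp})^{e}]}^{\perp}\bigr)$. The ``standard fact'' you invoke about this dual is not correct: a direct computation shows ${[(C^{\perp})^{e}]}^{\perp}=\{(c+a{\bf 1},\,a)\mid c\in C,\ a\in{\mathbb F}_{q}\}$, where ${\bf 1}$ is the all-ones vector of length $n$. This code contains $(c,0)$ for every $c\in C$, so its minimum distance is at most $d$ regardless of how ${(d)}_{even}$ and ${(d)}_{odd}$ compare; it can never produce the bound $d+1$ required in item (ii). (The code you actually describe in your last paragraph --- first $n$ coordinates in $C$, last coordinate equal to the sum of the first $n$ --- is $C^{e}$ itself, not ${[(C^{\perp})^{e}]}^{\perp}$; extension and dualization do not commute in the way your plan needs.) Your dimension worry is also fatal for this route rather than mere bookkeeping: a parity check matrix of $(C^{\perp})^{e}$ is $G$ (the generator matrix of $C$) padded with a zero column together with one extra all-ones row; since that extra row is the only row with a nonzero entry in the new coordinate, it is independent of all the others, so placing it in the top block forces $\operatorname{rk} H_{0}^{e}=k+1$ (dimension $k+1$, not $k$), placing it in the bottom block forces degree $\delta+1$, and it can never be ``already dependent''.

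The paper's proof avoids both problems by extending $C$ rather than $C^{\perp}$. Write the parity check matrix of $C^{\perp}$ as a generator matrix $G$ of $C$, partitioned into $G_{0}$ and $G_{1}$ with $\operatorname{rk} G_{0}=k\geq\operatorname{rk} G_{1}=\delta$, and append the overall parity coordinate to each row, obtaining a generator matrix $G^{e}$ of $C^{e}$ partitioned into $G_{0}^{e}$ and $G_{1}^{e}$ with the same ranks (no new row is added, so the dimension and degree are untouched). Applying Theorem~\ref{A} to the block code whose parity check matrix is $G^{e}$, namely ${(C^{e})}^{\perp}$, the matrix ${[G(D)]}^{e}=G_{0}^{e}+{\tilde G}_{1}^{e}D$ is a reduced basic generator matrix of an $(n+1,k,\delta;1,d_{f^{e}}{)}_{q}$ convolutional code with $d_{f^{e}}\geq d(C^{e})=d^{e}$, and $d^{e}\geq d$ when ${(d)}_{even}\leq{(d)}_{odd}$ while $d^{e}\geq d+1$ when ${(d)}_{odd}<{(d)}_{even}$, which is exactly the dichotomy in the statement. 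So the missing idea is to perform the extension on the split matrix itself (the generator matrix of $C$), not on the code $C^{\perp}$ from which $V$ was derived.
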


\begin{proof}
We only show item (ii). The code $V$ admits a reduced basic
generator matrix of the form $G(D) = {G}_0 + {\tilde G}_1 D$,
where $$G = \left[
\begin{array}{c}
G_0\\
G_1\\
\end{array}
\right]$$ is a generator matrix of $C$ with $\operatorname{rk} G_0
= k \geq \operatorname{rk} {G}_1 = \operatorname{rk} {\tilde G}_1
= \delta$. Consider the extended code $C^{e}$ generated by $$G^{e}
= \left[
\begin{array}{c}
G_{0}^{e}\\
G_{1}^{e}\\
\end{array}
\right].$$ From construction and from Theorem~\ref{A}, the matrix
${[G(D)]}^{e} = {G}_{0}^{e} + {\tilde G}_{1}^{e} D$ is a reduced
and basic generator matrix of a convolutional code $V^{e}$. By
construction, $V^{e}$ has parameters $(n+1, k, \delta ; 1,
d_{f^{e}} {)}_{q}$. From Theorem~\ref{A} and from hypothesis,
$d_{f^{e}}\geq d+1 $; hence the result follows.
\end{proof}

\subsection{The $({ \bf u}| { \bf u}+{ \bf v})$ Construction}\label{sub3.5}

The $({ \bf u}| { \bf u}+{ \bf v})$ construction
\cite{Macwilliams:1977,Huffman:2003} is another method for
constructing new classical linear codes.

Let $C_1$ and $C_{2}$ be two linear codes of same length both over
${\mathbb F}_{q}$ with parameters ${[n, k_1, d_1]}_{q}$ and ${[n,
k_{2}, d_{2}]}_{q}$, respectively. Then by applying the $({ \bf
u}| { \bf u}+{ \bf v})$ construction one can generate a new code
$C = \{ ({ \bf u}, { \bf u}+{ \bf v}) | { \bf u} \in C_1 , { \bf
v} \in C_{2} \}$ with parameters ${[2n, k_1 + k_{2}, \min \{
2d_{1}, d_{2} \}]}_{q}$.

Theorem~\ref{MAINV} is one of the main results of this subsection:

\begin{theorem}\label{MAINV}
Assume that there exist an $(n, k_1, {\delta}_1 ; 1, d_{f}^{(1)}
{)}_{q}$ convolutional code $V_1$ derived from an ${[n, k_1^{*},
d_1]}_{q}$ code $C_1$, where $d_{f}^{(1)} \geq d_{1}^{\perp}$, and
an $(n, k_{2}, {\delta}_{2} ; 1, d_{f}^{(2)} {)}_{q}$
convolutional code $V_{2}$ derived from an ${[n, k_{2}^{*},
d_{2}]}_{q}$ code $C_{2}$, where $d_{f}^{(2)} \geq d_{2}^{\perp}$.
Then there exists an $(2n, k_1 + k_{2}, {\delta}_{1}+ {\delta}_{2}
; 1, d_{f_{({ \bf u}| { \bf u}+{ \bf v})}} {)}_{q}$ convolutional
code $V_{({ \bf u}| { \bf u}+{ \bf v})}$, where $d_{f_{({ \bf u}|
{ \bf u}+{ \bf v})}}\geq \min\{2d_{2}^{\perp}, d_{1}^{\perp}\}$.
\end{theorem}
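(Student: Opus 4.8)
The plan is to mimic the proof of Theorem~\ref{MAINII} (direct sum), replacing the block-diagonal construction with the $({\bf u}|{\bf u}+{\bf v})$ construction and working with the dual codes, since Theorem~\ref{A} relates the free distance of the convolutional code to the minimum distance of the dual of the underlying block code. First I would record the hypotheses in the usual form: $V_1$ has a reduced basic generator matrix $[G(D)]_1 = \tilde{H}_0^{(1)} + \tilde{H}_1^{(1)}D$ with $H^{(1)} = \left[\begin{array}{c} H_0^{(1)}\\ H_1^{(1)}\end{array}\right]$ a parity check matrix of $C_1$ satisfying $\operatorname{rk} H_0^{(1)} = k_1 \geq \operatorname{rk} H_1^{(1)} = {\delta}_1$, and analogously $[G(D)]_2 = \tilde{H}_0^{(2)} + \tilde{H}_1^{(2)}D$ coming from $H^{(2)}$ with $\operatorname{rk} H_0^{(2)} = k_2 \geq \operatorname{rk} H_1^{(2)} = {\delta}_2$.

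The key step is to identify the right block code. Since a minimum distance $d_{1}^{\perp}$, $d_2^{\perp}$ appears in the conclusion, I want to apply the $({\bf u}|{\bf u}+{\bf v})$ construction in a way that produces a block code whose \emph{dual} has minimum distance $\min\{2d_2^{\perp}, d_1^{\perp}\}$. The relevant fact (standard, e.g. from \cite{Macwilliams:1977,Huffman:2003}) is that the dual of the $({\bf u}|{\bf u}+{\bf v})$ construction applied to $(A,B)$ is the $({\bf u}|{\bf u}+{\bf v})$ construction applied to $(B^{\perp}, A^{\perp})$; consequently, if $C_{({\bf u}|{\bf u}+{\bf v})}$ is built from $C_2^{\perp}$ and $C_1^{\perp}$, then its dual is built from $C_1$ and $C_2$ and — by the distance formula ${[2n, \cdot, \min\{2d_1, d_2\}]}$ applied to the inputs $C_2^{\perp}$ (playing the role of ``$C_1$'') and $C_1^{\perp}$ (playing the role of ``$C_2$'') — has minimum distance $\min\{2d_2^{\perp}, d_1^{\perp}\}$. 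So I would let $C_{({\bf u}|{\bf u}+{\bf v})}^{\perp}$ be the code generated by the block matrix
$$H_{({\bf u}|{\bf u}+{\bf v})} = \left[\begin{array}{cc} H^{(1)} & H^{(1)}\\ 0 & H^{(2)}\end{array}\right],$$
which is the parity check matrix of $C_1$ and $C_2$ combined in the dual-$({\bf u}|{\bf u}+{\bf v})$ shape; this code has minimum distance $\min\{2d_2^{\perp}, d_1^{\perp}\}$.

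Next I would split $H_{({\bf u}|{\bf u}+{\bf v})}$ into an $H_0$-part and an $H_1$-part inheriting the block structure, i.e.
$$[G(D)]_{({\bf u}|{\bf u}+{\bf v})} = \left[\begin{array}{cc} H_0^{(1)} & H_0^{(1)}\\ 0 & H_0^{(2)}\end{array}\right] + \left[\begin{array}{cc} \tilde{H}_1^{(1)} & \tilde{H}_1^{(1)}\\ 0 & \tilde{H}_1^{(2)}\end{array}\right]D,$$
and check the rank conditions of Theorem~\ref{A}: the row rank of the constant block is $k_1 + k_2$ (the two row-blocks are supported on disjoint or nested column sets in a way that makes the ranks add — the upper block has rank $k_1$ on the first $n$ columns and the lower block rank $k_2$ on the last $n$ columns), and the degree-one block has rank ${\delta}_1 + {\delta}_2 \leq k_1 + k_2$, so $G(D)$ is reduced and basic. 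Then Theorem~\ref{A} gives a convolutional code $V_{({\bf u}|{\bf u}+{\bf v})}$ with parameters $(2n, k_1+k_2, {\delta}_1 + {\delta}_2; 1, d_{f_{({\bf u}|{\bf u}+{\bf v})}})_q$ and, since $d_f \geq d^{\perp}$ in that theorem with $d^{\perp} = d(C_{({\bf u}|{\bf u}+{\bf v})}^{\perp})^{\perp}$... — here I must be careful: Theorem~\ref{A} says $d_f \geq d^{\perp}$ where $d^{\perp}$ is the minimum distance of the dual of the block code being expanded; so taking $C_{({\bf u}|{\bf u}+{\bf v})}$ itself (not its dual) as the block code ``$C$'' of Theorem~\ref{A}, with parity check matrix $H_{({\bf u}|{\bf u}+{\bf v})}$, we get $d_{f_{({\bf u}|{\bf u}+{\bf v})}} \geq d(C_{({\bf u}|{\bf u}+{\bf v})}) = \min\{2d_2^{\perp}, d_1^{\perp}\}$, which is exactly the claim.

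The main obstacle I anticipate is the bookkeeping around which code plays the role of ``$C$'' in Theorem~\ref{A} and getting the dual-of-$({\bf u}|{\bf u}+{\bf v})$ identity aligned so that the two doublings land on $d_2^{\perp}$ rather than $d_1^{\perp}$; a sign error there swaps the roles of $C_1$ and $C_2$ in the bound. The secondary nuisance is verifying that the stacked constant matrix really has full rank $k_1 + k_2$ and that the memory stays $1$ (i.e. $\operatorname{rk}$ of the $D$-coefficient block is $\leq$ that of the constant block, which follows from ${\delta}_i \leq k_i$), together with the clause from Theorem~\ref{A} about omitting zero or duplicate rows when forming $\tilde{H}_1$. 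These are the same routine checks performed in Theorems~\ref{MAINII} and \ref{MAINIII}, so I would handle them by direct reference to that construction. \hfill $\Box$
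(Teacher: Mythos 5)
Your overall strategy is the same as the paper's: stack the two parity check matrices into a $2\times 2$ block matrix whose row space is a Plotkin-type combination of $C_1^{\perp}$ and $C_2^{\perp}$, split it into its $H_0$- and $H_1$-parts, and invoke Theorem~\ref{A} for reducedness, basicity, the parameters and the bound $d_f\geq d^{\perp}$. However, the concrete matrix you chose realizes the wrong combination: the row space of
$$\left[\begin{array}{cc} H^{(1)} & H^{(1)}\\ 0 & H^{(2)}\end{array}\right]$$
is $\{({\bf u},{\bf u}+{\bf v}) \mid {\bf u}\in C_1^{\perp},\ {\bf v}\in C_2^{\perp}\}$, i.e.\ the $({\bf u}|{\bf u}+{\bf v})$ construction with $C_1^{\perp}$ in the ${\bf u}$-slot, whose minimum distance is $\min\{2d_1^{\perp}, d_2^{\perp}\}$, not $\min\{2d_2^{\perp}, d_1^{\perp}\}$ as you assert. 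This is precisely the swap you flagged as the main danger: to match your own dual-of-Plotkin bookkeeping you need the doubled block to be $H^{(2)}$, that is $\left[\begin{array}{cc} H^{(2)} & H^{(2)}\\ 0 & H^{(1)}\end{array}\right]$, or equivalently you must relabel the two input codes (which is legitimate, since the hypotheses are symmetric). As written, the distance you claim for the displayed block code is false, so the stated inequality does not follow from your construction without this correction.

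For comparison, the paper sidesteps the dual-of-Plotkin identity entirely: it takes
$$H_{({\bf u}|{\bf u}+{\bf v})}=\left[\begin{array}{cc} H^{(1)} & 0\\ -H^{(2)} & H^{(2)}\end{array}\right],$$
whose row space is $\{({\bf u}-{\bf v},{\bf v})\mid {\bf u}\in C_1^{\perp},\ {\bf v}\in C_2^{\perp}\}$, and verifies directly that its minimum distance is $\min\{2d_2^{\perp}, d_1^{\perp}\}$ (if ${\bf u}={\bf 0}$ the weight is $2\,\mathrm{wt}({\bf v})\geq 2d_2^{\perp}$; if ${\bf u}\neq{\bf 0}$ the weight is at least $\mathrm{wt}({\bf u})\geq d_1^{\perp}$), then proceeds exactly as you do with the corresponding $H_0$- and $\tilde{H}_1$-blocks. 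Your rank and memory checks and the appeal to Theorem~\ref{A} are fine, modulo the small slip at the end where you write $d(C_{({\bf u}|{\bf u}+{\bf v})})$ for what should be $d(C_{({\bf u}|{\bf u}+{\bf v})}^{\perp})$, the minimum distance of the code generated by the parity check matrix; once the two blocks are interchanged the argument goes through and essentially coincides with the paper's proof.
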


\begin{proof}
The code $V_1$ has a (reduced basic) generator matrix of the form
${[G(D)]}_{1} = {H}_0^{(1)} + {\tilde H}_1^{(1)} D$, where
$$H^{(1)} = \left[
\begin{array}{c}
H_0^{(1)}\\
H_1^{(1)}\\
\end{array}
\right]$$ is a parity check matrix of $C_1$ and $\operatorname{rk}
H_0^{(1)} = k_1 \geq \operatorname{rk} {H}_1^{(1)} =
\operatorname{rk} {\tilde H}_1^{(1)} = {\delta}_{1}$. The code
$V_{2}$ has a (reduced basic) generator matrix of the form
${[G(D)]}_{2} = {H}_0^{(2)} + {\tilde H}_1^{(2)} D$, where
$$H^{(2)} = \left[
\begin{array}{c}
H_0^{(2)}\\
H_1^{(2)}\\
\end{array}
\right]$$ is a parity check matrix of $C_{2}$ and
$\operatorname{rk} H_0^{(2)} = k_{2} \geq \operatorname{rk}
{H}_1^{(2)} = \operatorname{rk} {\tilde H}_1^{(2)} =
{\delta}_{2}$. Consider the code $C_{({ \bf u}| { \bf u}+{ \bf
v})}^{\perp}$ generated by the matrix $$H_{({ \bf u}| { \bf u}+{
\bf v})} = \left[
\begin{array}{cc}
H^{(1)} & 0\\
-H^{(2)} & H^{(2)}\\
\end{array}
\right].$$ Let us compute the minimum distance $ d( C_{({ \bf u}|
{ \bf u}+{ \bf v})}^{\perp})$ of $C_{({ \bf u}| { \bf u}+{ \bf
v})}^{\perp}$. The codewords of $C_{({ \bf u}| { \bf u}+{ \bf
v})}^{\perp}$ are of the form $\{ ({ \bf u} -{ \bf v}, { \bf v})|
{ \bf u} \in C_{1}^{\perp}, { \bf v} \in C_{2}^{\perp} \}$. Let us
consider the the codeword ${ \bf w}=({ \bf u} -{ \bf v}, { \bf
v})$. If ${ \bf u}= { \bf 0}$ then ${ \bf w}=(-{ \bf v}, { \bf
v})$, so the minimum distance of $C_{({ \bf u}| { \bf u}+{ \bf
v})}^{\perp}$ is given by $2d_{2}^{\perp}$. On the other hand, if
${ \bf u} \neq { \bf 0}$ then wt$({ \bf w})=$wt$({ \bf u} -{ \bf
v})+$wt$({ \bf v})=$d$({ \bf u}, { \bf v})+$d$(({ \bf v}, { \bf
0})\geq$d$({ \bf u}, { \bf 0})=$wt$({ \bf u})$. Thus, in this
case, the minimum distance of $C_{({ \bf u}| { \bf u}+{ \bf
v})}^{\perp}$ is given by $d_{1}^{\perp}$. Therefore, $ d( C_{({
\bf u}| { \bf u}+{ \bf v})}^{\perp})= \min \{ 2d_{2}^{\perp},
d_{1}^{\perp}\}$. Next, we can construct a new convolutional code
$V_{({ \bf u}| { \bf u}+{ \bf v})}$ with generator matrix
$${[G(D)]}_{({ \bf u}| { \bf u}+{ \bf v})} = \left[
\begin{array}{cc}
H_{0}^{(1)} & 0\\
-H_{0}^{(2)} & H_{0}^{(2)}\\
\end{array}
\right]+
\left[
\begin{array}{cc}
{\tilde H}_{1}^{(1)} & 0\\
-{\tilde H}_{1}^{(2)} & {\tilde H}_{1}^{(2)}\\
\end{array}
\right] D.$$ From construction and from Theorem~\ref{A},
${[G(D)]}_{({ \bf u}| { \bf u}+{ \bf v})}$ is a reduced basic
generator matrix for $V_{({ \bf u}| { \bf u}+{ \bf v})}$. Again
from construction, the code $V_{({ \bf u}| { \bf u}+{ \bf v})}$
has parameters $(2n, k_1 + k_{2}, {\delta}_{1}+ {\delta}_{2} ; 1,
d_{f_{({ \bf u}| { \bf u}+{ \bf v})}} {)}_{q}$. Finally, from
Theorem~\ref{A}, it follows that $d_{f_{({ \bf u}| { \bf u}+{ \bf
v})}} \geq \min \{ 2d_{2}^{\perp}, d_{1}^{\perp}\}$, and the
result follows.
\end{proof}

Let us see how to construct memory-two convolutional codes derived
from the $({ \bf u}| { \bf u}+{ \bf v})$ construction. We call the
attention for the fact that all the proposed methods can be
generalized for multi-memory codes.

\begin{theorem}\label{MAINVI}
Assume that there exist an $(n, k_1, {\delta}_1 ; 2, d_{f}^{(1)}
{)}_{q}$ convolutional code $V_1$ derived from an ${[n, k_1^{*},
d_1]}_{q}$ code $C_1$, where $d_{f}^{(1)} \geq d_{1}^{\perp}$, and
an $(n, k_{2}, {\delta}_{2} ; 2, d_{f}^{(2)} {)}_{q}$
convolutional code $V_{2}$ derived from an ${[n, k_{2}^{*},
d_{2}]}_{q}$ code $C_{2}$, where $d_{f}^{(2)} \geq d_{2}^{\perp}$.
Then there exists an $(2n, k_1 + k_{2}, {\delta}_{1}+ {\delta}_{2}
; 2, d_{f_{({ \bf u}| { \bf u}+{ \bf v})}} {)}_{q}$ convolutional
code $V_{({ \bf u}| { \bf u}+{ \bf v})}$, where $d_{f_{({ \bf u}|
{ \bf u}+{ \bf v})}}\geq \min\{2d_{2}^{\perp}, d_{1}^{\perp}\}$.
\end{theorem}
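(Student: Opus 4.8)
The plan is to transcribe the proof of Theorem~\ref{MAINV} almost word for word, replacing the two-block partition of each parity check matrix by the three-block partition appropriate to memory two, and then to carry out the degree bookkeeping in the style of the proof of Theorem~\ref{MAINIIIP}. First I would fix parity check matrices $H^{(1)}$ and $H^{(2)}$ of $C_1$ and $C_2$, partitioned as $H^{(j)}=[H_0^{(j)};H_1^{(j)};H_2^{(j)}]$, so that, by Theorem~\ref{A}, $V_1$ and $V_2$ are generated by the reduced basic matrices ${[G(D)]}_{1}={H}_0^{(1)}+{\tilde H}_1^{(1)}D+{\tilde H}_2^{(1)}D^2$ and ${[G(D)]}_{2}={H}_0^{(2)}+{\tilde H}_1^{(2)}D+{\tilde H}_2^{(2)}D^2$, with $\operatorname{rk}H_0^{(j)}=k_j$ maximal among $\operatorname{rk}H_0^{(j)},\operatorname{rk}H_1^{(j)},\operatorname{rk}H_2^{(j)}$ for $j=1,2$.

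Next I would introduce the block code $C_{({\bf u}|{\bf u}+{\bf v})}^{\perp}$ generated by $H_{({\bf u}|{\bf u}+{\bf v})}=\left[\begin{array}{cc}H^{(1)}&0\\-H^{(2)}&H^{(2)}\end{array}\right]$, exactly as in the proof of Theorem~\ref{MAINV}. The computation of its minimum distance is identical to the one carried out there: a codeword has the form $({\bf u}-{\bf v},{\bf v})$ with ${\bf u}\in C_1^{\perp}$, ${\bf v}\in C_2^{\perp}$, and separating the cases ${\bf u}={\bf 0}$ and ${\bf u}\neq{\bf 0}$ gives $d(C_{({\bf u}|{\bf u}+{\bf v})}^{\perp})=\min\{2d_2^{\perp},d_1^{\perp}\}$. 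I then take as candidate generator matrix
$${[G(D)]}_{({\bf u}|{\bf u}+{\bf v})}=\left[\begin{array}{cc}H_0^{(1)}&0\\-H_0^{(2)}&H_0^{(2)}\end{array}\right]+\left[\begin{array}{cc}{\tilde H}_1^{(1)}&0\\-{\tilde H}_1^{(2)}&{\tilde H}_1^{(2)}\end{array}\right]D+\left[\begin{array}{cc}{\tilde H}_2^{(1)}&0\\-{\tilde H}_2^{(2)}&{\tilde H}_2^{(2)}\end{array}\right]D^2,$$
which is precisely the polynomial matrix produced by Theorem~\ref{A} from the natural three-block partition of $H_{({\bf u}|{\bf u}+{\bf v})}$ (after the obvious reordering of rows).

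It then remains to verify the hypotheses of Theorem~\ref{A} for this matrix and to read off the parameters. The constant term has full row rank $k_1+k_2$, because its blocks $H_0^{(1)}$ and $H_0^{(2)}$ have full row ranks $k_1$ and $k_2$ and the array is block lower triangular; and the $D$- and $D^2$-coefficients have ranks $\operatorname{rk}H_1^{(1)}+\operatorname{rk}H_1^{(2)}$ and $\operatorname{rk}H_2^{(1)}+\operatorname{rk}H_2^{(2)}$, both at most $k_1+k_2$, so $\kappa=k_1+k_2$ is maximal, as required. For the degree I would argue, as in the proof of Theorem~\ref{MAINIIIP} but applied to each of the two summands separately: because of the block-triangular shape the nonzero rows among the first $k_1$ (resp. last $k_2$) rows of the $D$- and $D^2$-coefficients are exactly those of ${\tilde H}_i^{(1)}$ (resp. ${\tilde H}_i^{(2)}$), so the row-degree profile of ${[G(D)]}_{({\bf u}|{\bf u}+{\bf v})}$ is the concatenation of the row-degree profiles of ${[G(D)]}_{1}$ and ${[G(D)]}_{2}$, whence the degree is ${\delta}_1+{\delta}_2$. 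Finally, Theorem~\ref{A} gives that ${[G(D)]}_{({\bf u}|{\bf u}+{\bf v})}$ is reduced and basic, that $V_{({\bf u}|{\bf u}+{\bf v})}$ has parameters $(2n,k_1+k_2,{\delta}_1+{\delta}_2;2,d_{f_{({\bf u}|{\bf u}+{\bf v})}}{)}_q$, and that $d_{f_{({\bf u}|{\bf u}+{\bf v})}}\geq d(C_{({\bf u}|{\bf u}+{\bf v})}^{\perp})=\min\{2d_2^{\perp},d_1^{\perp}\}$.

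The step I expect to be the main obstacle is this degree computation. One has to make sure that padding each block ${\tilde H}_i^{(j)}$ with zero rows up to the common number $\kappa=k_1+k_2$ of rows does not disturb the additivity of the row degrees, and this forces the same enumeration of cases on the relative sizes of $\operatorname{rk}H_1^{(j)}$ and $\operatorname{rk}H_2^{(j)}$ that appeared in the proof of Theorem~\ref{MAINIIIP}, now to be run for both $j=1$ and $j=2$ simultaneously; in each of the resulting combined cases the total degree must be checked to collapse to ${\delta}_1+{\delta}_2$. Everything else — the minimum-distance estimate, the rank conditions, and the final invocation of Theorem~\ref{A} — is a direct transcription of the memory-one argument of Theorem~\ref{MAINV}.
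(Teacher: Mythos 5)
Your proposal is correct and follows essentially the same route as the paper's own proof: the same block matrix $H_{({\bf u}|{\bf u}+{\bf v})}$, the same minimum-distance computation inherited from Theorem~\ref{MAINV}, the same three-term block generator matrix, and the same appeal to Theorem~\ref{A} for reducedness, basicity and the free-distance bound. Your row-degree-profile argument for $\delta_{({\bf u}|{\bf u}+{\bf v})}={\delta}_1+{\delta}_2$ is in fact slightly more explicit than the paper, which handles this by the rank-ordering case analysis for $V_1$ (with a ``without loss of generality'' for $V_2$) and then simply asserts the additivity of the degrees.
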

\begin{proof}
The code $V_1$ has a (reduced basic) generator matrix of the form
${[G(D)]}_{1} = {H}_0^{(1)} + {\tilde H}_1^{(1)} D + {\tilde
H}_{2}^{(1)} D^{2}$, where $$H^{(1)} = \left[
\begin{array}{c}
H_0^{(1)}\\
H_1^{(1)}\\
H_{2}^{(1)}\\
\end{array}
\right]$$ is a parity check matrix of $C_1$. We have to consider
two cases. The first case is when $\operatorname{rk} H_0^{(1)} =
k_1 \geq \operatorname{rk} {H}_{2}^{(1)} \geq \operatorname{rk}
{H}_{1}^{(1)}$. In this case the computation of the parameters of
the final code is straightforward. Let us consider the second
case, i.e., $\operatorname{rk} H_0^{(1)} = k_1 \geq
\operatorname{rk} {H}_1^{(1)} \geq \operatorname{rk}
{H}_{2}^{(1)}$. We know that ${\delta}_1 =
2\operatorname{rk}H_{2}^{(1)} + (\operatorname{rk}
{H}_1^{(1)}-\operatorname{rk}H_{2}^{(1)})$. Similarly, let us
consider the code $V_{2}$ has a (reduced basic) generator matrix
of the form ${[G(D)]}_{2} = {H}_0^{(2)} + {\tilde H}_1^{(2)} D +
{\tilde H}_{2}^{(2)} D^{2}$, where $$H^{(2)} = \left[
\begin{array}{c}
H_0^{(2)}\\
H_1^{(2)}\\
H_{2}^{(2)}\\
\end{array}
\right]$$ is a parity check matrix of $C_{2}$; without loss of
generality we may assume that $\operatorname{rk} H_0^{(2)} = k_{2}
\geq \operatorname{rk} {H}_1^{(2)} \geq \operatorname{rk}
{H}_{2}^{(2)} $. We know that ${\delta}_{2} =
2\operatorname{rk}H_{2}^{(2)} + (\operatorname{rk}
{H}_1^{(2)}-\operatorname{rk}H_{2}^{(2)})$. Consider the code
$C_{({ \bf u}| { \bf u}+{ \bf v})}^{\perp}$ generated by the
matrix $$H_{({ \bf u}| { \bf u}+{ \bf v})} = \left[
\begin{array}{cc}
H^{(1)} & 0\\
-H^{(2)} & H^{(2)}\\
\end{array}
\right].$$ From the proof of Theorem~\ref{MAINV}, the minimum
distance of $C_{({ \bf u}| { \bf u}+{ \bf v})}^{\perp}$ is given
by $ d( C_{({ \bf u}| { \bf u}+{ \bf v})}^{\perp})= \min \{
2d_{2}^{\perp}, d_{1}^{\perp}\}$.

Next, we can construct a new convolutional code $V_{({ \bf u}| {
\bf u}+{ \bf v})}$ with generator matrix
\begin{eqnarray*}
{[G(D)]}_{({ \bf u}| { \bf u}+{ \bf v})} = \left[
\begin{array}{cc}
H_{0}^{(1)} & 0\\
-H_{0}^{(2)} & H_{0}^{(2)}\\
\end{array}
\right]+ \left[
\begin{array}{cc}
{\tilde H}_{1}^{(1)} & 0\\
-{\tilde H}_{1}^{(2)} & {\tilde H}_{1}^{(2)}\\
\end{array}
\right] D \\ + \left[
\begin{array}{cc}
{\tilde H}_{2}^{(1)} & 0\\
-{\tilde H}_{2}^{(2)} & {\tilde H}_{2}^{(2)}\\
\end{array}
\right] D^{2}.
\end{eqnarray*}
From construction and from Theorem~\ref{A}, ${[G(D)]}_{({ \bf u}|
{ \bf u}+{ \bf v})}$ is a reduced basic generator matrix for
$V_{({ \bf u}| { \bf u}+{ \bf v})}$. Again from construction, the
code $V_{({ \bf u}| { \bf u}+{ \bf v})}$ has parameters $(2n, k_1
+ k_{2}, {\delta}_{({ \bf u}| { \bf u}+{ \bf v})} ; 2, d_{f_{({
\bf u}| { \bf u}+{ \bf v})}} {)}_{q}$. It is not difficult to see
that ${\delta}_{({ \bf u}| { \bf u}+{ \bf v})}= {\delta}_{1}+
{\delta}_{2}$. From Theorem~\ref{A}, it follows that $d_{f_{({ \bf
u}| { \bf u}+{ \bf v})}} \geq \min \{ 2d_{2}^{\perp},
d_{1}^{\perp}\}$. Therefore one can get an $(2n, k_1 + k_{2},
{\delta}_{1}+ {\delta}_{2} ; 2, d_{f_{({ \bf u}| { \bf u}+{ \bf
v})}} {)}_{q}$ convolutional code, where $d_{f_{({ \bf u}| { \bf
u}+{ \bf v})}}\geq \min\{2d_{2}^{\perp}, d_{1}^{\perp}\}$, and the
result follows.
\end{proof}

\subsection{Product codes}\label{sub3.6}

In this section we show how to construct convolutional codes by
applying the product code construction.

Recall the given by classical linear codes $C_1 = {[n_1, k_1,
d_1]}_{q}$ and $C_{2}={[n_{2}, k_{2}, d_{2}]}_{q}$ over ${\mathbb
F}_{q}$, the direct product code denoted by $C_1 \otimes C_{2}$ is
the linear code over ${\mathbb F}_{q}$ with parameters $C_1 =
{[n_{1}n_{2}, k_{1}k_{2}, d_{1}d_{2}]}_{q}$. If
$G_1=(g_{ij}^{(1)})$ and $G_{2}=(g_{ij}^{(2)})$ are generator
matrices for $C_1$ and $C_{2}$ respectively, then the Kronecker
product $G_1 \otimes G_{2}=(g_{ij}^{(1)}G_{2})$ is a generator
matrix for $C_1 \otimes C_{2}$. We are now ready to show the next
result:

\begin{theorem}\label{MAINproducode}
Assume that there exist an $(n_1, k_1, {\delta}_1 ; 1, d_{f}^{(1)}
{)}_{q}$ convolutional code $V_1$ derived from an ${[n_{1},
k_{1}^{*}, d_{1}^{\perp} ]}_{q}$ code $C_1^{\perp}$, where
$d_{f}^{(1)}\geq d_1$, and an $(n_{2}, k_{2}, {\delta}_{2} ; 1,
d_{f}^{(2)} {)}_{q}$ convolutional code $V_{2}$ derived from an
${[n_{2}, k_{2}, d_{2}^{\perp}]}_{q}$ code $C_{2}^{\perp}$, where
$d_{f}^{(2)}\geq d_{2}$. Then there exists an $(n_{1} n_{2},
k_{1}k_{2}, {\delta}_{1}{\delta}_{2} ; 1, d_{\otimes}{)}_{q}$
convolutional code $V_{\otimes}$, where $d_{\otimes}\geq
d_{1}d_{2}$.
\end{theorem}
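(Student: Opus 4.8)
The plan is to mimic closely the pattern established in Theorems~\ref{MAINII} and \ref{MAINV}: start from the parity check descriptions of the two constituent convolutional codes, build a block parity check matrix for the dual of the product code, split it into two ``layers'' of the right ranks, and then invoke Theorem~\ref{A} to pass back to a convolutional code. Concretely, since $V_1$ is derived from $C_1^{\perp}$ and $V_2$ from $C_2^{\perp}$, each $V_j$ has a reduced basic generator matrix of the form ${[G(D)]}_j = {\tilde H}_0^{(j)} + {\tilde H}_1^{(j)} D$, where
$$
H^{(j)} = \left[
\begin{array}{c}
H_0^{(j)}\\
H_1^{(j)}\\
\end{array}
\right]
$$
is a parity check matrix for $C_j^{\perp}$ (equivalently a generator matrix for $C_j$), with $\operatorname{rk} H_0^{(j)} = k_j \geq \operatorname{rk} H_1^{(j)} = {\delta}_j$. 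Since the rows of $H^{(1)}$ and $H^{(2)}$ generate $C_1$ and $C_2$ respectively, the Kronecker product $H^{(1)} \otimes H^{(2)}$ is a generator matrix of $C_1 \otimes C_2$, i.e.\ a parity check matrix of $(C_1 \otimes C_2)^{\perp} =: C_{\otimes}^{\perp}$, whose minimum distance is the classical product-code distance $d_1 d_2$.

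Next I would organize $H^{(1)} \otimes H^{(2)}$ into the layered form required by Theorem~\ref{A}. Writing each $H^{(j)}$ as the stacking of $H_0^{(j)}$ over $H_1^{(j)}$, the Kronecker product decomposes block-wise into the four pieces $H_0^{(1)}\otimes H_0^{(2)}$, $H_0^{(1)}\otimes H_1^{(2)}$, $H_1^{(1)}\otimes H_0^{(2)}$, and $H_1^{(1)}\otimes H_1^{(2)}$. The idea is to collect $H_0^{(1)}\otimes H_0^{(2)}$ into the ``degree-zero'' layer $H_0^{\otimes}$ and the remaining three blocks into the ``degree-one'' layer $H_1^{\otimes}$, so that the associated polynomial generator matrix is ${[G(D)]}_{\otimes} = H_0^{\otimes} + {\tilde H}_1^{\otimes} D$. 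One then checks the rank conditions of Theorem~\ref{A}: $\operatorname{rk}(H_0^{(1)}\otimes H_0^{(2)}) = k_1 k_2$ because rank is multiplicative under Kronecker product, and the second layer has rank equal to the sum (or at least the intended value) yielding overall degree ${\delta}_1 {\delta}_2$ — here one uses that $\operatorname{rk}(H_1^{(1)}\otimes H_1^{(2)}) = {\delta}_1{\delta}_2$ together with the counting $k_1 k_2 + {\delta}_1 {\delta}_2 \le$ (number of rows of $H^{(1)}\otimes H^{(2)}$), padding with zero rows as the ${\tilde H}$ notation allows, exactly as in the proof of Theorem~\ref{MAINI}. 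Once ${[G(D)]}_{\otimes}$ is shown to satisfy the hypotheses of Theorem~\ref{A}, it is a reduced basic generator matrix, the resulting code $V_{\otimes}$ has parameters $(n_1 n_2, k_1 k_2, {\delta}_1 {\delta}_2; 1, d_{\otimes})_q$, and Theorem~\ref{A} gives $d_{\otimes} \geq d((C_{\otimes})^{\perp})^{\perp}$; since $C_{\otimes}^{\perp}$ plays the role of the ``$C^{\perp}$'' of Theorem~\ref{A}, this lower bound is the distance of $C_{\otimes} = C_1 \otimes C_2$, namely $d_1 d_2$, as claimed.

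The step I expect to be the main obstacle is making the degree bookkeeping genuinely come out to ${\delta}_1{\delta}_2$. Splitting $H^{(1)}\otimes H^{(2)}$ into just two layers is a choice, and the ranks of the three ``off-diagonal'' Kronecker blocks ($H_0^{(1)}\otimes H_1^{(2)}$ of rank $k_1{\delta}_2$, $H_1^{(1)}\otimes H_0^{(2)}$ of rank ${\delta}_1 k_2$, $H_1^{(1)}\otimes H_1^{(2)}$ of rank ${\delta}_1{\delta}_2$) do not obviously sum to ${\delta}_1{\delta}_2$ once overlaps are accounted for; one has to argue that after removing duplicate/zero rows and stacking appropriately, the combined second layer $H_1^{\otimes}$ has rank exactly ${\delta}_1{\delta}_2$ so that the overall constraint length is $k_1 k_2 \cdot 0 + {\delta}_1{\delta}_2 \cdot 1 = {\delta}_1{\delta}_2$ — which is the pattern in the earlier theorems where only the first layer contributes the dimension and the remaining layers contribute the degree. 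I would handle this by noting that the layered decomposition of $H^{(1)}$ and $H^{(2)}$ themselves already satisfy the Theorem~\ref{A} rank hypotheses and that the Kronecker product of two such ``unit-memory'' decompositions again admits a unit-memory decomposition with degree equal to the product, invoking multiplicativity of rank at each stage; the remaining parameters (length $n_1 n_2$, dimension $k_1 k_2$, memory $1$, and the distance bound $d_{\otimes} \geq d_1 d_2$) then follow routinely ``by construction and from Theorem~\ref{A},'' exactly as in the proofs of Theorems~\ref{MAINII}, \ref{MAINIV}, and \ref{MAINV}.
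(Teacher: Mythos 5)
Your template (layer the matrices, stack, invoke Theorem~\ref{A}) is the right one, but the specific decomposition you choose cannot deliver the stated degree, and the obstacle you yourself flag is a genuine obstruction rather than a bookkeeping nuisance. If you take the full Kronecker product $H^{(1)}\otimes H^{(2)}$ as the layered matrix, with $H_0^{(1)}\otimes H_0^{(2)}$ as the degree-zero layer and the three remaining blocks as the degree-one layer, then (when the $H^{(j)}$ have full row rank, as in the paper's setting where $\operatorname{rk}H_0^{(j)}=k_j$ and $\operatorname{rk}H_1^{(j)}=\delta_j$) the row spaces of the four Kronecker blocks are independent, so the second layer has rank exactly $k_1\delta_2+\delta_1 k_2+\delta_1\delta_2$; there are no ``overlaps'' to remove, and deleting zero or duplicate rows never lowers the rank of a layer. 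Consequently either the hypothesis $\operatorname{rk}H_i\le\operatorname{rk}H_0$ of Theorem~\ref{A} fails (since $k_1\delta_2+\delta_1k_2+\delta_1\delta_2$ may exceed $k_1k_2$), or, when it holds, the construction yields a unit-memory code of degree $k_1\delta_2+\delta_1k_2+\delta_1\delta_2$, not $\delta_1\delta_2$. Your proposed repair (``multiplicativity of rank at each stage'' plus row cleaning) does not address this: no choice of splitting of the \emph{full} product parity check matrix into a degree-zero and a degree-one layer with first-layer rank $k_1k_2$ can have second-layer rank $\delta_1\delta_2$.

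The missing idea, and what the paper actually does, is to give up on using the whole product code: the layered matrix is taken to be only the two diagonal blocks, $M_\otimes$ with layers $G_0^{(1)}\otimes G_0^{(2)}$ and $G_1^{(1)}\otimes G_1^{(2)}$ (the paper writes the $H^{(j)}$ as generator matrices $G^{(j)}$ of $C_j$), discarding the cross terms $G_0^{(1)}\otimes G_1^{(2)}$ and $G_1^{(1)}\otimes G_0^{(2)}$ altogether. Then the layer ranks are exactly $k_1k_2$ and $\delta_1\delta_2$ with $\delta_1\delta_2\le k_1k_2$, so Theorem~\ref{A} applies and gives dimension $k_1k_2$, memory $1$ and degree $\delta_1\delta_2$. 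The distance bound survives because the code relevant for Theorem~\ref{A} is the one \emph{generated} by the stacked matrix: the row space of $M_\otimes$ is contained in that of $G^{(1)}\otimes G^{(2)}$, so this code $C_\otimes$ is a subcode of $C_1\otimes C_2$ and hence $d(C_\otimes)\ge d_1d_2$, giving $d_\otimes\ge d_1d_2$. In short, one does not need the layered matrix to check the dual of the product code itself; any stacked matrix generating a subcode of $C_1\otimes C_2$ suffices for the distance bound, and choosing precisely the diagonal blocks is what makes the degree come out to $\delta_1\delta_2$.
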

\begin{proof}
The code $V_1$ has a generator matrix of the form $[G(D){]}_{1} =
{\tilde G}_{0}^{(1)} + {\tilde G}_{1}^{(1)} D$, where $$G^{(1)} =
\left[
\begin{array}{c}
G_0^{(1)}\\
G_1^{(1)}\\
\end{array}
\right]$$ is a generator matrix of the code $C_1={[n_{1}, n_1 -
k_{1}^{*}, d_1 ]}_{q}$ and $\operatorname{rk} G_0^{(1)} = k_1 \geq
\operatorname{rk} G_1^{(1)} = \operatorname{rk} {\tilde G}_1^{(1)}
= {\delta}_1$. The code $V_{2}$ has a generator matrix of the form
$[G(D){]}_{2} = {\tilde G}_{0}^{(2)} + {\tilde G}_{1}^{(2)} D$,
where $$G^{(2)} = \left[
\begin{array}{c}
G_0^{(2)}\\
G_1^{(2)}\\
\end{array}
\right]$$ is a generator matrix of the code $C_{2}={[n_{2}, n_{2}
- k_{2}^{*}, d_{2} ]}_{q}$ and $\operatorname{rk} G_0^{(2)} =
k_{2} \geq \operatorname{rk} G_1^{(2)} = \operatorname{rk} {\tilde
G}_1^{(2)} = {\delta}_{2}$. Consider next the (linear) code
$C_{\otimes}$ generated by the matrix $$M_{\otimes} = \left[
\begin{array}{c}
G_0^{(1)}\otimes G_0^{(2)}\\
G_1^{(1)}\otimes G_1^{(2)}\\
\end{array}
\right].$$ We know that $C_{\otimes}$ is a subcode of
$G^{(1)}\otimes G^{(2)}$, so $C_{\otimes}$ has minimum distance
greater than or equal to $d_1 d_{2}$. We construct a convolutional
code $V_{\otimes}$ with generator matrix
$$[G(D){]}_{\otimes} = [G_{0}^{(1)} \otimes G_{0}^{(2)}]+ { \tilde G}D,$$ where $G
=[{G}_{1}^{(1)}\otimes {G}_{1}^{(2)}]$. From construction and from
Theorem~\ref{A} $[G(D){]}_{\otimes}$ is a reduced basic generator
matrix. Again by construction, $V_{\otimes}$ has parameters
$(n_{1} n_{2}, k_{1} k_{2}, {\delta}_{1} {\delta}_{2} ; 1,
d_{\otimes}{)}_{q}$, and from Theorem~\ref{A}, $d_{\otimes}\geq
d_{1}d_{2}$. Thus the result follows.
\end{proof}

\section{Code Constructions and Code Tables}\label{IV}

In this section we utilize the convolutional code expansion
developed in Subsection~\ref{sub3.1} to obtain new families of
convolutional codes. In order to shorten the length of this paper
we only apply this technique, although it is clear that all
construction methods proposed in Section~\ref{III} can also be
applied in order to construct new families of convolutional codes.
Moreover, we do not construct explicitly the new families of
convolutional codes shown in Tables~\ref{table1} and \ref{table2}
because the constructions are simple applications of
Theorem~\ref{MAINI}.

\begin{remark}
It is important to observe that in all convolutional code
presented in Tables~\ref{table1} and \ref{table2}, we expand the
codes defined over ${\mathbb F}_{q}$ (where $q=p^{t}$, $t\geq 1$
and $p$ prime) with respect to the prime field ${\mathbb F}_{p}$.
However, the method also holds if one expands such a codes over
any subfield of the field ${\mathbb F}_{q}$.
\end{remark}

In Tables~\ref{table1} and \ref{table2} we tabulated only
convolutional BCH codes shown in \cite{LaGuardia:2012} for the
Hermitian case, although in \cite{LaGuardia:2012} we also
construct families of convolutional BCH codes for the Euclidean
case.

\section{Summary}\label{V}

We have shown how to construct new families of convolutional codes
by applying the techniques of puncturing, extending, expanding,
direct sum, the $({ \bf u}| { \bf u}+{ \bf v})$ construction and
the product code construction. As examples of code expansion, new
convolutional codes derived from cyclic and character codes were
constructed.

\section*{Acknowledgment}
This research has been partially supported by the Brazilian
Agencies CAPES and CNPq.

\clearpage

\begin{table}[bth!]
\begin{center}
\caption{Families of Convolutional Codes \label{table1}}
\begin{tabular}{|c| c| c|}
\hline Code Family / $(n, k, \delta; \mu, d_{f}{)}_{q}$ & Range of Parameters & Ref.\\
\hline BCH & &\\
\hline $(2^{2m}-2^{m}, 2^{2m}-2^{m}-m, m; \mu, 3{)}_{2}$ & $m\geq 1$ & \cite{Hole:2000}\\
\hline $(2^{2m-1}, 2^{2m-1}-m, m; \mu, 4{)}_{2}$ & $m\geq 2$ & \cite{Hole:2000}\\
\hline $(2^{m}-1, 2^{m}-1-(t-1)m, m; \mu, d_{f}\geq 2t+1{)}_{2}$ & $2 \leq t < 2^{\lceil m/2 \rceil -1} +1, m\geq 3$ & \cite{Hole:2000}\\
\hline $(2^{m}-1, 2^{m}-2-(t-1)m, m; \mu, d_{f}\geq 2t+2{)}_{2}$ & $2 \leq t < 2^{\lceil m/2 \rceil -1} +1, m\geq 3$ & \cite{Hole:2000}\\
\hline $(n, n-r\lceil \delta(1-1/q)\rceil, \delta ; 1, d_{f}\geq
\delta +1 + \Delta(\delta+1, 2\delta) {)}_{q}$ &
& \cite{Aly:2007}\\
\hline  & $\gcd(n, q)=1$, $r=$ord$_{n}(q)$, $2\leq 2\delta < {\delta}_{max}$ &\\
\hline  & ${\delta}_{max}= \lfloor n/(q^{r}-1) (q^{\lceil r/2 \rceil} -1-(q-2)[r odd])\rfloor$ & \\
\hline  & $\Delta(\alpha, \beta)\geq q + \lfloor (\beta - \alpha +3)/q \rfloor-2$, if $\beta - \alpha \geq 2q-3$, & \\
\hline  & $\Delta(\alpha, \beta)\geq \lfloor (\beta - \alpha +3)/2 \rfloor$, otherwise & \\
\hline $(n, n-2(i-2)-1, 2; \mu, d_{f}\geq i+1)_{q^2}$ &  &\cite{LaGuardia:2012}\\
\hline  & $n = q^{4} - 1$, $q\geq 3$ prime power, & \\
\hline & $3\leq i\leq q^{2}-1$ &\\
\hline $(n, n-m(2q^2-3)-1, m; \mu , d_{f}\geq 2q^2 +
2)_{q^2}$  &  &\cite{LaGuardia:2012}\\
\hline  &  $n=q^{2m}-1$, $q\geq 4$ prime power, &\\
\hline  & $m= \ {{ord}_{n}}(q^2)\geq 3$ &\\
\hline $(n, n-mi-1, mj; \mu, d_{f}\geq i+j+2
)_{q^2}$ & &\cite{LaGuardia:2012}\\
\hline  & $n=q^{2m}-1$, $q\geq 4$ prime power, &\\
\hline  & $m= \
{{ord}_{n}}(q^2)\geq 3$, $1\leq i=j\leq q^2 - 2$ &\\
\hline $(n, n-m(i-1)-1, m; \mu, d_{f}\geq i+2)_{q^2}$ &  &\cite{LaGuardia:2012}\\
\hline  & $n=q^{2m}-1$, $q\geq 4$ prime power, &\\
\hline  &$m= \ {{ord}_{n}}(q^2)\geq 3$, $1\leq i < q^2 - 1$&\\
\hline $(n, n-m(i-2)-1, 2m; \mu, d_f \geq i+2)_{q^2}$ & &\cite{LaGuardia:2012}\\
\hline & $n=q^{2m}-1$, $q\geq 4$ prime power, & \\
\hline & $m= \ {{ord}_{n}}(q^2)\geq 3$, $3\leq i < q^2 - 1$&\\
\hline $(n, n-m(i-\mu)-1, m\mu; {\mu}^{*}, d_{f}
\geq i-\mu + 4)_{q^2}$ & &\cite{LaGuardia:2012}\\
\hline & $n=q^{2m}-1$, $q\geq 4$ prime power, & \\
\hline & $m= \ {{ord}_{n}}(q^2) \geq 3$, $\mu \geq 3$, $\mu +
1\leq i < q^2 - 1$ &\\
\hline $(n, n-2i-1, 2j; \mu, d_{f}\geq i+j+2)_{q^2}$ &  &\cite{LaGuardia:2012}\\
\hline & $n = q^{4} - 1$, $q\geq 3$ prime power,  &\\
\hline & $1\leq i=j$, $2\leq i+j\leq q^{2}-2$ &\\
\hline
\hline Expanded BCH & &\\
\hline Apply Theorem~\ref{MAINI} in the BCH codes above &  & New\\
\hline
\hline RS & &\\
\hline $(n, {\mu}/2, {\mu}/2; 1, d_{f} > \mu +1 {)}_{q}$ &  & \cite{Klappi:2007}\\
\hline  & $q$ prime power, $n$ odd divisor of $q^{2}-1$ & \\
\hline  & $q+1 < n < q^{2}-1$, $2\leq \mu=2t\leq \lfloor n/(q+1)\rfloor$ & \\
\hline
\hline Expanded RS & &\\
\hline $(rn, r{\mu}/2, r{\mu}/2; 1, d_{f} > \mu +1 {)}_{p}$ &  & New\\
\hline  & $q=p^{r}$ prime power, $n$ odd divisor of $q^{2}-1$ & \\
\hline  & $q+1 < n < q^{2}-1$, $2\leq \mu=2t\leq \lfloor n/(q+1)\rfloor$ & \\
\hline
\hline RM & &\\
\hline $(2^{m-l}, k(r), \delta \leq 2^{l}-1; \mu, d_{f}=2^{m-r} {)}_{2}$ &  & \cite{Klappi:2007}\\
\hline  & $1\leq l \leq m$, $k(r)=\displaystyle\sum_{i=0}^{r}
\left(
\begin{array}{c}
m-l\\
i\\
\end{array}
\right)$, &\\
\hline & $0\leq r\leq \lfloor (m-l-1)/2 \rfloor$ &\\
\hline
\hline MDS & &\\
\hline ${(n, 1, \delta; \mu, d_{f}=n(\delta +1))}_{q}$  & $0\leq \delta\leq n-1, \ n\leq q -1$ & \cite{Langfeld:2005}\\
\hline ${(n, n-2i, 2; 1, 2i+3)}_{q}$ & & \cite{LaGuardia:2013P}\\
\hline & $1\leq i \leq \frac{q}{2} - 2$, $q=2^t$, &\\
\hline & $t\geq 3$ integer, $n=q+1$ &\\
\hline ${(n, n-2i+1, 2; 1, 2i+2)}_{q}$ & & \cite{LaGuardia:2013P}\\
\hline & $q=p^{t}$, $t\geq 2$ integer, $p$ odd prime, &\\
\hline & $n=q+1$, $1\leq i \leq \frac{n}{2} - 2$ &\\
\hline
\hline Expanded MDS & &\\
\hline ${(tn, t, t\delta; \mu, d_{f}\geq n(\delta +1))}_{p}$  & & New\\
\hline & $q=p^{t}$ prime power &\\
\hline & $0\leq \delta\leq n-1, \ n\leq q -1$ &\\
\hline ${(tn, t(n-2i), 2t; 1, d_f \geq 2i+3)}_{2}$ & & New\\
\hline & $1\leq i \leq \frac{q}{2} - 2$, $q=2^t$, &\\
\hline & $t\geq 3$ integer, $n=q+1$ &\\
\hline ${(tn, t(n-2i+1), 2t; 1, d_f \geq 2i+2)}_{p}$ & & New\\
\hline & $q=p^{t}$, $t\geq 2$ integer, $p$ odd prime, &\\
\hline & $n=q+1$, $1\leq i \leq \frac{n}{2} - 2$ &\\
\hline
\end{tabular}
\end{center}
\end{table}

\clearpage

\begin{table}[bth!]
\begin{center}
\caption{Families of Convolutional Codes \label{table2}}
\begin{tabular}{|c| c| c|}
\hline Code Family / $(n, k, \delta; \mu, d_{f}{)}_{q}$ & Range of Parameters & Ref.\\
\hline Character & &\\
\hline $(2^{m}, 2^{m}- s_{m}(u), s_{m}(u)- s_{m}(r) ; 1, d_{f}\geq
2^{r+1} {)}_{q}$ & & \cite{LaGuardia:2013B}\\
\hline $(2^{m}, s_{m}(u), s_{m}(u)- s_{m}(r) ; \mu,
d_{f}^{\perp}\geq 2^{m-u}+1 {)}_{q}$ & & \cite{LaGuardia:2013B}\\
\hline & $q$ power of an odd prime, $m\geq 3$, &\\
\hline & $r, u \in { \mathbb Z}$, $0< r < u < m$, &\\
\hline & $\displaystyle\sum_{i=u+1}^{m}\left(
\begin{array}{c}
m\\
i\\
\end{array}
\right) > \displaystyle\sum_{i=r+1}^{u}\left(
\begin{array}{c}
m\\
i\\
\end{array}
\right)$, $s_{m}(v)= \displaystyle\sum_{i=0}^{v}\left(
\begin{array}{c}
m\\
i\\
\end{array}
\right)$ & \\
\hline $(2^{m}, 2^{m}- s_{m}(u), \delta ; 2, d_{f}\geq 2^{r+1}
{)}_{q}$ & &\cite{LaGuardia:2013B}\\
\hline & $q$ power of an odd prime, $m\geq 4$, $\delta =
\displaystyle\sum_{i=r+1}^{v}\left(
\begin{array}{c}
m\\
i\\
\end{array}
\right)$, &\\
\hline & $r, u, v \in {\mathbb Z}$, $0 < r < v < u < m$, &\\
\hline & $\displaystyle\sum_{i=u+1}^{m}\left(
\begin{array}{c}
m\\
i\\
\end{array}
\right) \geq \displaystyle\sum_{i=r+1}^{v}\left(
\begin{array}{c}
m\\
i\\
\end{array}
\right) \geq \displaystyle\sum_{i=v+1}^{u}\left(
\begin{array}{c}
m\\
i\\
\end{array}
\right)$, &\\
\hline $(l^{m}, l^{m}-S_{m}(u), S_{m}(u)- S_{m}(r); 1, d_{f}\geq
(b+2)l^{a} {)}_{q}$ & &\cite{LaGuardia:2013B}\\
\hline  & $m\geq 3$, $l\geq 3$ are integers, $q$ prime power, $l | (q - 1)$,  &\\
\hline & $r, u \in { \mathbb Z}$,
$0< r < u < m(l-1)$ &\\
\hline  & $\displaystyle\sum_{i=u+1}^{m}\left(
\begin{array}{c}
m\\
i\\
\end{array}
\right)_{l} \geq\displaystyle\sum_{i=r+1}^{u}\left(
\begin{array}{c}
m\\
i\\
\end{array}
\right)_{l},$ &\\
\hline & $a, b \in { \mathbb Z}$, $r = a(l- 1) +
b$, $0 \leq b \leq l-2$, &\\
\hline  & $S_{m}(v)= \displaystyle\sum_{i=0}^{v}
\displaystyle\sum_{k=0}^{m}{(-1)}^{k}\left(
\begin{array}{c}
m\\
k\\
\end{array}
\right)\left(
\begin{array}{c}
m-1+i-kl\\
m-1\\
\end{array}
\right)$ &\\
\hline
\hline Expanded Character & &\\
\hline $(t2^{m}, t(2^{m}- s_{m}(u)), t(s_{m}(u)- s_{m}(r)) ; 1,
d_{f}\geq
2^{r+1} {)}_{p}$ & & New\\
\hline $(t2^{m}, ts_{m}(u), t(s_{m}(u)- s_{m}(r)) ; \mu,
d_{f}\geq 2^{m-u}+1 {)}_{p}$ & & New\\
\hline & $q=p^{t}$ power of an odd prime, $m\geq 3$, &\\
\hline &
$r, u \in { \mathbb Z}$, $0< r < u < m$ &\\
\hline & $\displaystyle\sum_{i=u+1}^{m}\left(
\begin{array}{c}
m\\
i\\
\end{array}
\right) > \displaystyle\sum_{i=r+1}^{u}\left(
\begin{array}{c}
m\\
i\\
\end{array}
\right)$, $s_{m}(v)= \displaystyle\sum_{i=0}^{v}\left(
\begin{array}{c}
m\\
i\\
\end{array}
\right)$ & \\
\hline $(t2^{m}, t(2^{m}- s_{m}(u)), t\delta ; 2, d_{f}\geq
2^{r+1}{)}_{p}$ & & New\\
\hline & $q=p^{t}$ power of an odd prime, &\\
\hline & $m\geq 4$, $\delta = \displaystyle\sum_{i=r+1}^{v}\left(
\begin{array}{c}
m\\
i\\
\end{array}
\right)$ &\\
\hline & $r, u, v \in { \mathbb Z}$, $0 < r < v < u < m$, &\\
\hline & $\displaystyle\sum_{i=u+1}^{m}\left(
\begin{array}{c}
m\\
i\\
\end{array}
\right) \geq \displaystyle\sum_{i=r+1}^{v}\left(
\begin{array}{c}
m\\
i\\
\end{array}
\right) \geq \displaystyle\sum_{i=v+1}^{u}\left(
\begin{array}{c}
m\\
i\\
\end{array}
\right)$, &\\
\hline $(tl^{m}, t(l^{m}-S_{m}(u)), t(S_{m}(u)- S_{m}(r)); 1,
d_{f}\geq (b+2)l^{a} {)}_{q}$ & & New\\
\hline  & $m\geq 3$, $l\geq 3$ are integers,  &\\
\hline & $q=p^{t}$ prime power, $l | (q - 1)$ &\\
\hline & $r, u \in { \mathbb Z}$,
$0< r < u < m(l-1)$ &\\
\hline  & $\displaystyle\sum_{i=u+1}^{m}\left(
\begin{array}{c}
m\\
i\\
\end{array}
\right)_{l} \geq\displaystyle\sum_{i=r+1}^{u}\left(
\begin{array}{c}
m\\
i\\
\end{array}
\right)_{l},$ &\\
\hline & $a, b \in { \mathbb Z}$, $r = a(l- 1) +
b$, $0 \leq b \leq l-2$, &\\
\hline  & $S_{m}(v)= \displaystyle\sum_{i=0}^{v}
\displaystyle\sum_{k=0}^{m}{(-1)}^{k}\left(
\begin{array}{c}
m\\
k\\
\end{array}
\right)\left(
\begin{array}{c}
m-1+i-kl\\
m-1\\
\end{array}
\right)$ &\\
\hline
\hline Melas & &\\
\hline
\hline $(q^{m}-1, q^{m}-m-1, m; 1, d_{f}\geq 3 {)}_{q}$ &
$q\neq 2$ prime power, $q$ even, $m\geq 2$ & \cite{LaGuardia:2013C}\\
\hline $(2^{m}-1, 2^{m}-m-1, m; 1, d_{f} \geq 5{)}_{2}$ & $m$ odd
& \cite{LaGuardia:2013C}\\
\hline
\hline Expanded Melas & &\\
\hline \hline $(t(q^{m}-1), t(q^{m}-m-1), tm; 1, d_{f}\geq 3
{)}_{p}$ &
$q\neq 2$ prime power, $q=p^{t}$ even, $m\geq 2$ & New\\
\hline
\end{tabular}
\end{center}
\end{table}

\clearpage

%%%%%%%%%%%%%%%%%%%%%%%%%%%%%%%%%%%%%%%%%%%%%%%%%%%%%%%%%%%%%%%%%%%%%%%%%%%%%%%%%%%%%%%
%%%%%%%%%%%%%%%%%%%%%%%%%%%%%%%%%%%%%%%%%%%%%%%%%%%%%%%%%%%%%%%%%%%%%%%%%%%%%%%%%%%%%%%
%%%%%%%%%%%%%%%%%%%%%%%%%%%%%%%%%%%%%%%%%%%%%%%%%%%%%%%%%%%%%%%%%%%%%%%%%%%%%%%%%

\end{document}